\documentclass[10pt,journal]{IEEEtran}

\usepackage{amssymb}
\usepackage{amsmath}
\usepackage{cite}
\usepackage{url}
\usepackage{xcolor}
\usepackage{cite,graphicx,amsmath,amssymb}
\usepackage{fancyhdr}
\usepackage{mdwmath}
\usepackage{mdwtab}
\usepackage{caption}
\usepackage{amsthm}
\usepackage{setspace}
\usepackage{hyperref}
\usepackage{algorithm}
\usepackage{algorithmic}
\usepackage{multirow}
\usepackage{makecell}
\usepackage{mathtools}
\usepackage{subcaption}
\usepackage{bm}
\usepackage{tikz}
\usepackage{xurl}
\usepackage{stfloats}
\usepackage{mathrsfs}

\usepackage{booktabs}
\usepackage{multirow}
\usepackage{siunitx}
\usepackage{balance}

\hypersetup{colorlinks=true,
linkcolor=blue,
citecolor=blue,      
urlcolor=black,
}

\graphicspath{{./src/img/}}

\DeclareMathOperator*{\argmax}{arg\,max}

\newtheorem{remark}{Remark}
\newtheorem{theorem}{Theorem}

\newtheorem{lemma}{Lemma}

\newtheorem{corollary}{Corollary}

\newtheorem{proposition}{Proposition}

\allowdisplaybreaks
\NewDocumentCommand{\multiubrace}{mmm}
 {
  \egreg_multiubrace:nnn {#1} {#2} {#3}
 }

\newcommand{\softmax}{\mathrm{softmax}}
\newcommand{\kron}{\otimes}

\title{Generative Site-Specific Beamforming for UPAs via Decoupled Channel Sensing}

\author{
        Yao Tang and Zhaolin Wang 
\thanks{
Yao Tang is with the AI for Science and Engineering Center, Shenzhen Loop Area Institution, China (e-mail: yaotang@slai.edu.cn).}
\thanks{
Zhaolin Wang is with the Department of Electrical and Computer Engineering, The University of Hong Kong, Hong Kong (e-mail: zhaolin.wang@hku.hk).}
}

\begin{document}

\maketitle

\begin{abstract}
A cross-fused generative site-specific beamforming (GenSSBF) framework is proposed for low-overhead beam alignment in uniform planar array (UPA) systems. A decoupled channel sensing strategy is developed, where the azimuth and elevation domains of the UPA are probed independently, and the online sweeping overhead is reduced from multiplicative to linear complexity compared to exhaustive two-dimensional codebook sweeping. 
However, the resulting reference signal received power (RSRP) observations only contain marginal angular power information. The explicit azimuth-elevation coupling of the UPA channel is therefore lost. Beam generation from these separate observations becomes highly ambiguous. 
To address this issue, a bidirectional cross-attention encoder is designed to extract and fuse the latent dependency between the azimuth and elevation sensing branches. 
Conditioned on the fused feature, a conditional normalizing flow generator is proposed to generate a compact set of high-fidelity beam candidates. These candidates are further verified through lightweight pilot measurements for final beam selection. 
A task-oriented training objective is also introduced to encourage the generated candidate set to contain at least one high-gain beam, rather than fitting the full conditional beam distribution. 
Simulation results based on DeepMIMO scenarios show that the proposed framework consistently outperforms deterministic beam prediction and conventional discrete Fourier transform (DFT) codebook search. Compared with the full 1024-beam two-dimensional DFT search, normalized beamforming gain improvements of 83.6\%, 74.6\%, and 38.1\% are achieved in the I2\_28, O1B\_28, and Boston5G\_28 scenarios, respectively, while the sweeping overhead is reduced by 93.8\%.
\end{abstract}

\begin{IEEEkeywords}
    Site-specific beamforming, normalizing flow model, generative artificial intelligence.
\end{IEEEkeywords}

\section{Introduction}


\IEEEPARstart{B}EAMFORMING has become a fundamental technique in modern wireless networks, enabling base stations (BSs) equipped with multiple antennas to exploit spatial degrees of freedom for coverage enhancement, interference suppression, and spectral efficiency improvement \cite{MIMO, xue2024survey, ning2023beamforming}. Its importance spans both sub-6 GHz massive multiple-input multiple-output (MIMO) systems and higher-frequency deployments, where directional transmission is needed to support reliable and high-rate wireless links \cite{mmWave_ming, mmWave_kutty, heath2016overview}. By concentrating radiated energy toward intended user equipments (UEs), beamforming can substantially improve the received signal strength and link quality. However, these gains critically depend on accurate beam alignment between the BS and UE, which remains a key challenge as antenna arrays become larger and wireless channels become more spatially selective.

In practical fifth-generation new radio (5G NR) systems, beam alignment is commonly realized through channel-dependent beamforming or grid-of-beams based sweeping \cite{alkhateeb2014channel, New_5g}. The former can achieve near-optimal performance. However, it relies on accurate channel state information (CSI), leading to substantial pilot training, feedback, and signal processing overhead, particularly in large-scale antenna systems. The latter avoids explicit full CSI acquisition by sweeping a predefined codebook and selecting the beam with the strongest reference signal received power (RSRP) \cite{dahlman20205g,song2017common}. This procedure reduces signal processing complexity and feedback granularity, but its adaptivity is limited by the finite beam set. 
Consequently, existing beam alignment methods face a fundamental tradeoff between beamforming fidelity and online sweeping overhead.

To address this, site-specific beamforming (SSBF) has recently emerged as a promising direction \cite{li2019learning_codebook, kwak2024sitespecific_codebook, heng2022learning_codebook, dreifuerst2025neural, ning2023learning, heng2024_gridfree, abdallah2025explainable, wu2024environmentaware, alkhateeb2018deep}.The key idea of SSBF is to exploit the fact that wireless propagation around a BS is not arbitrary, but is shaped by persistent site-dependent factors such as building geometry, dominant reflectors, blockage regions, and user distributions. By learning such environmental regularities offline, the BS can infer high-quality beams using deep neural networks from a small number of online sensing measurements. More recently, generative SSBF (GenSSBF) has further shown that beamforming should not always be treated as a deterministic prediction problem \cite{zhaolin2026, zihao2026beam, zihao2026fastbeam, chengjie2026optcodebook, chengjie2026learncodebook}. The coarse measurements such as RSRP discard phase information and may correspond to multiple plausible channels. Therefore, a generative model can produce a set of candidate beams and thereby better handle the intrinsic multimodality of beam inference.

Existing GenSSBF studies have mainly focused on beam generation from low-dimensional probing feedback obtained from one-dimensional probing feedback or generic codebook measurements. For example, diffusion-based beam-brainstorm generates multiple beam candidates from RSRP prompts \cite{zihao2026beam}, fast beam-brainstorm improves the generation speed and supports flexible probing lengths \cite{zihao2026fastbeam}, information-maximizing codebook design improves the informativeness of the probing stage \cite{chengjie2026optcodebook}, and site-specific propagation priors were embedded into a standardized limited-feedback framework by using an RSRP fingerprint to infer a dominant beam subspace before low-dimensional coefficient feedback \cite{chengjie2026learncodebook}. 
These studies demonstrate the potential of generative beam inference, but the overhead issue caused by two-dimensional beam sensing in planar-array systems remains insufficiently explored. Directly extending codebook-based sensing to a full two-dimensional angular grid causes the online sweeping overhead to grow with multiplicative complexity over the two angular dimensions.

A natural way to reduce this overhead is to probe the azimuth and elevation domains separately, which reduces the sensing cost from multiplicative complexity to linear complexity. However, this overhead reduction introduces several new challenges for UPA GenSSBF. 
1) \emph{Information loss}: Decoupled sensing only preserves the marginal power responses along the two angular dimensions, and thus does not explicitly retain the joint azimuth-elevation coupling of the underlying channel. In other words, the BS can observe which azimuth lobes and which elevation lobes are strong, but cannot directly determine which azimuth and elevation lobes correspond to the same physical propagation path. 2) \emph{Lobe mismatch}: Since each one-dimensional probing beam is broad in the other angular dimension, the measured RSRP may contain energy from multiple paths, sidelobe leakage, and reflected components. As a result, simply pairing the strongest azimuth lobe with the strongest elevation lobe may create a virtual angular direction that does not correspond to any true dominant path. 3) \emph{Phase ambiguity}: The power-only nature of RSRP feedback discards the phase and coherent combining information required for high-quality beam synthesis over the full planar aperture. Therefore, different multipath configurations may induce very similar separate RSRP observations while requiring significantly different beamformers.


The above observations suggest that reducing the sensing overhead alone is not sufficient for reliable UPA beam alignment. Since decoupled channel sensing removes the explicit azimuth-elevation coupling, the missing joint angular information must be inferred from the site-specific propagation structure reflected in the separate RSRP observations. 
This motivates the development of a GenSSBF framework that can extract cross-domain dependency from low-dimensional sensing feedback and generate multiple candidate beams under the inherent angular ambiguity.

In this paper, we propose a cross-fused GenSSBF framework for low-overhead beam alignment in UPA-enabled downlink systems. The BS first carries out azimuth and elevation decoupled channel sensing and obtains two low-dimensional RSRP vectors from the UE. Then, a cross-fused encoder learns the latent dependency between the two sensing branches, and a conditional normalizing flow generates a small set of candidate beamformers. Finally, the UE verifies these candidates through lightweight pilot measurements and feeds back the best candidate index for data transmission. We aim to address three main challenges: 1) How can the online channel sensing overhead of UPA beam alignment be reduced without full two-dimensional sweeping? 2) How can the latent azimuth-elevation coupling be extracted from separate RSRP observations? 3) How can the generative model be trained to produce task-oriented beam candidates instead of merely fitting the full conditional density?

The main contributions are summarized as follows:

\begin{itemize}
    \item \textit{Low-overhead decoupled sensing for UPA:} We develop a decoupled channel sensing framework for UPA downlink systems, where the azimuth and elevation domains are probed independently. This reduces the online sensing overhead from multiplicative complexity to linear complexity. We further characterize the key limitation of decoupled sensing, namely that the resulting RSRP observations only contain marginal angular power information and therefore lose the explicit joint azimuth-elevation coupling. This explains why direct deterministic inversion from separate RSRP observations to the beamformer is generally ambiguous.

    \item \textit{Cross-fused GenSSBF architecture:} We propose a generative beamforming framework that consists of angular-domain channel encoding, a cross-fused encoder, and a conditional normalizing flow generator. The angular-domain channel is represented by amplitude and cosine-sine phase components to avoid phase wrapping. The cross-attention encoder allows the azimuth and elevation sensing branches to exchange complementary information and recover latent cross-domain dependency. Conditioned on the fused feature, the normalizing flow generator samples multiple feasible beam candidates from a simple Gaussian latent distribution.

    \item \textit{Task-oriented training loss:} We design a normalized beamforming gain loss that directly optimizes the best candidate among multiple generated beams. Different from likelihood-based training, which attempts to fit the entire conditional beam distribution, the proposed objective focuses on whether the generated candidate set covers at least one high-gain beam. We also provide a high-gain coverage interpretation, showing that the success probability increases with the probability mass assigned by the generator to the high-gain region and with the number of generated candidates.

    \item \textit{Performance evaluation in realistic propagation scenarios:} We evaluate the proposed framework using DeepMIMO scenarios, including indoor, outdoor blockage, and urban mmWave environments \cite{deepmimo}. Simulation results show that the proposed GenSSBF consistently outperforms deterministic multi-layer perceptron (MLP)-based beam prediction, GenSSBF without cross-attention, and conventional DFT codebook search under the same sensing budget. With a two-dimensional DFT codebook as the full sweeping benchmark, the proposed decoupled sensing design substantially reduces the online sensing overhead while achieving beam patterns and normalized beamforming gains close to the maximum ratio transmission (MRT) upper bound in practical SNR regimes.
\end{itemize}

The rest of this paper is organized as follows. Section \ref{sec:model} introduces the system model, including the channel model and decoupled channel sensing procedure. Section \ref{sec:framework} presents the cross-fused GenSSBF framework, including the angular-domain channel representation, cross-fused encoder, and conditional normalizing flow generator. Section \ref{sec:training_infer} develops the task-oriented training objective and the online candidate verification procedure. Section \ref{sec:simulation} provides simulation results and performance comparisons, and Section \ref{sec:conclusion} concludes the paper.

\emph{Notations:} Scalars, vectors/matrices, and Euclidean subspaces are denoted by regular, boldface, and calligraphic letters, respectively. The sets of complex, real, and integer numbers are represented by $\mathbb{C}$, $\mathbb{R}$, and $\mathbb{Z}$, respectively. The inverse, transpose, and conjugate transpose operations are represented by $(\cdot)^{-1}$, $(\cdot)^T$, and $(\cdot)^H$, respectively. The absolute value and Euclidean norm are indicated by $|\cdot|$ and $\|\cdot\|$, respectively. 

\section{System Model} \label{sec:model}

As shown in Fig.~\ref{fig:system_model}, we consider a narrowband downlink communication system. The BS is equipped with a UPA mounted on the $xy$-plane, with $N_t = N_xN_y$ transmit antennas. The UE is equipped with a single antenna and is located in the far field of the BS array. The downlink channel is assumed to be block fading, i.e., it remains approximately constant within each coherence interval. 
The beam management and feedback procedures considered in this paper are carried out within such a coherence interval.

\subsection{Channel Model}

We adopt a geometric channel model, under which the downlink channel is represented as the superposition of $L$ dominant propagation paths \cite{elayach2014spatially}. Let $\mathbf{h} \in \mathbb{C}^{N_t \times 1}$ denote the instantaneous downlink channel vector, which can be expressed as \cite{heath2018foundations}
\begin{equation}
  \mathbf h = \sum\limits_{l=1}^{L} \alpha_l \, \mathbf a(\phi_l, \theta_l),
  \label{eq:channel}
\end{equation}
where $\alpha_l \in \mathbb{C}$ denotes the complex gain of the $l$-th path, accounting for path loss, shadowing, and small-scale fading, and $(\phi_l,\theta_l)$ denote the corresponding azimuth and elevation angles of departure (AoDs), respectively.

\begin{figure}[t]
\begin{center}
\setlength{\abovecaptionskip}{0cm}
\includegraphics[width=8.8cm, trim = 0.3cm 0cm 0cm 0cm, clip = true]{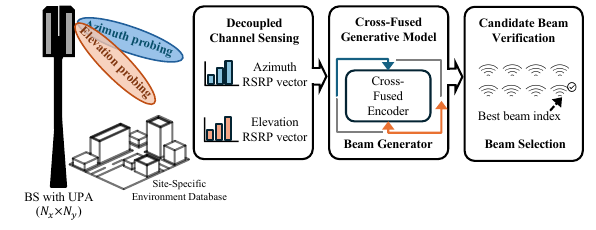}
\caption{\footnotesize{Generative site-specific beamforming (GenSSBF) for a UPA-enabled downlink system.}}
\label{fig:system_model}
\end{center}
\end{figure}

The steering vector $\mathbf a(\phi_l,\theta_l) \in \mathbb{C}^{N_t\times 1}$ characterizes the array response toward direction $(\phi_l,\theta_l)$. For the considered UPA, the steering vector admits a separable Kronecker structure \cite{song2017common}. Specifically, the array response vectors along the $x$- and $y$-axes are given by, respectively,
\begin{equation} \label{equ:az_vector}
\begin{array}{rll}
\!\!\!\! \mathbf a_x(\phi_l, \theta_l) \!=\! \displaystyle \frac{1}{\sqrt{N_x}}
   [1, \!\!\!\!& e^{j\frac{2\pi}{\lambda}d\sin\phi_l\sin\theta_l}, \ldots, \\
& e^{j(N_x-1)\frac{2\pi}{\lambda}d\sin\phi_l\sin\theta_l}]^{T} \!\in\! \mathbb{C}^{N_x\times 1},
\end{array}
\end{equation}
\begin{equation}
\begin{array}{rll}
\mathbf a_y (\theta_l) \!=\! \displaystyle \frac 1 {\sqrt {N_y}} 
[1, \!\!\!\!& e^{j\frac{2\pi}{\lambda}d\cos{\theta_l}}, \ldots, \\
& e^{j(N_y-1)\frac{2\pi}{\lambda}d\cos{\theta_l}}]^{T} \in \mathbb{C}^{N_y\times 1},
\end{array}
\end{equation}
where $\lambda$ and $d$ denote the carrier wavelength and antenna spacing, respectively. Hence, the overall steering vector is
\begin{equation}
  \mathbf a(\phi_l,\theta_l) = \mathbf a_x(\phi_l,\theta_l) \kron \mathbf a_y(\theta_l) \in \mathbb{C}^{N_t\times 1}.
  \label{eq:steer_kron}
\end{equation}

Due to the high hardware cost and power consumption of fully digital beamforming at high carrier frequencies, we consider an analog or hybrid beamforming architecture at the BS \cite{heng2022learning_codebook}. In particular, the BS is assumed to employ a single RF chain together with a phase-shifter network. Let $\mathbf w \in \mathbb{C}^{N_t\times 1}$ denote the analog beamforming vector. It satisfies the transmit power constraint and the constant-modulus constraint imposed by the phase shifters.
Moreover,
let $s$ denote the transmitted symbol intended for the UE, with $\mathbb E[|s|^2]=1$. The received signal at the UE is
\begin{equation}
    y = \sqrt{P_t} \, \mathbf{h}^H \mathbf{w} s + n,
    \label{eq:received_signal}
\end{equation}
where $P_t$ is the transmit power and $n \sim \mathcal{CN}(0,\sigma^2)$ denotes additive white Gaussian noise. Accordingly, the received SNR is given by
\begin{equation}
\text{SNR} = \frac{P_t |\mathbf{h}^H \mathbf{w}|^2}{\sigma^2}.
\label{eq:snr}
\end{equation}

\subsection{RSRP Fingerprint}

In conventional beam management, acquiring the instantaneous full CSI of a large UPA is prohibitively expensive in terms of pilot overhead and feedback signaling. Therefore, practical systems typically perform beam alignment through codebook-based beam sweeping during the initial access stage \cite{giordani2019tutorial,barati2016initial,3gpp38214}. In particular, the BS sequentially transmits synchronization signal blocks (SSBs) over a predefined beam codebook, the UE measures the RSRP associated with each probing beam, and the beam yielding the largest RSRP is selected for access establishment.

Consider a conventional two-dimensional DFT codebook
\begin{equation}
\mathcal C = \{\mathbf c_{m,n}: m=1,\ldots,M_{\phi},\; n=1,\ldots,M_{\theta}\},
\end{equation}
where $M_{\phi}$ and $M_{\theta}$ denote the numbers of candidate beams in the azimuth and elevation domains, respectively. The total number of probing beams is $|\mathcal C| = M_{\phi}M_{\theta}$. For the probing beam $\mathbf c_{m,n}$, the received SSB signal at the UE is modeled as
\begin{equation}
    \mathbf y_{m,n} = \sqrt{P_{\mathrm{SSB}}}\, \mathbf h^H \mathbf c_{m,n} \mathbf s_{\mathrm{SSB}} + \mathbf n_{\mathrm{SSB},m,n},
    \label{eq:ssb_received_2d}
\end{equation}
where $P_{\mathrm{SSB}}$ is the SSB transmit power, $\mathbf s_{\mathrm{SSB}} \in \mathbb{C}^{L_s\times 1}$ collects the $L_s$ pilot symbols used for measurement, and $\mathbf n_{\mathrm{SSB},m,n}\sim \mathcal{CN}(\mathbf 0,\sigma^2\mathbf I_{L_s})$ denotes the corresponding noise vector.

The UE estimates the RSRP associated with $\mathbf c_{m,n}$ by averaging the received signal power over the $L_s$ SSB symbols, i.e., \cite{3gpp38215}
\begin{equation}
   r_{m,n} \triangleq \frac{1}{L_s}\sum_{l=1}^{L_s}|y_{m,n}[l]|^2,
    \label{eq:rsrp_def_2d}
\end{equation}
where $y_{m,n}[l]$ is the $l$-th entry of $\mathbf y_{m,n}$. Collecting the RSRP values over all probing beams yields the following two-dimensional RSRP fingerprint:
\begin{equation}
\mathbf R \triangleq [r_{m,n}] \in \mathbb{R}_{+}^{M_{\phi}\times M_{\theta}}.
\label{eq:rsrp_map}
\end{equation}
The UE then selects the access beam as
\begin{equation}
(\hat m,\hat n) = \arg\max_{m,n} r_{m,n},
\label{eq:best_beam_2d}
\end{equation}
and reports the corresponding beam index, or a shortlist of candidate beams, to the BS for subsequent access establishment and directional transmission.

Although the above procedure is simple and widely adopted, it becomes increasingly inefficient for large UPAs due to the following reasons \cite{giordani2019tutorial,song2017common}. 
\emph{First}, the sweeping overhead scales multiplicatively as $M_{\phi}M_{\theta}$, which grows rapidly when fine angular resolution is required in both dimensions. 
Moreover, reporting a two-dimensional beam fingerprint or multiple candidate beams also increases the feedback burden. 
\emph{Second}, the resulting beam is confined to a finite DFT codebook and thus suffers from beam quantization loss whenever the true dominant propagation direction does not align with the sampled codebook directions. 
Moreover, the same codebook is indiscriminately shared by all UEs and cannot explicitly exploit the site-specific propagation structure.

\subsection{Decoupled Channel Sensing}   \label{sec:sensing}

To alleviate the prohibitive overhead of exhaustive 2D sweeping, we consider a decoupled channel sensing strategy that probes the azimuth and elevation domains independently. Compared with the full UPA sweep, this strategy reduces the online sensing overhead from a multiplicative scaling to a linear one, while still preserving coarse directional information in both angular dimensions.

\subsubsection{Azimuth Sensing}

Let $\mathcal C_{\phi}=\{\mathbf c_1^{(\phi)},\ldots,\mathbf c_{M_{\phi}}^{(\phi)}\}$ denote an azimuth sensing codebook, where each codeword probes a different azimuth sector while maintaining broad coverage in elevation. A natural choice is
\begin{equation}
\mathbf c_m^{(\phi)} = \mathbf a_x(\phi_m,\theta_0)\kron \bar{\mathbf a}_y,
\end{equation}
where $\phi_m$ is a sampled azimuth angle, $\theta_0$ is a nominal elevation, and $\bar{\mathbf a}_y = \frac{1}{\sqrt{N_y}}\mathbf 1_{N_y}$ provides broad elevation coverage.
For $m$-th azimuth probing beam, the received SSB signal is
\begin{equation}
    \mathbf y_{m}^{(\phi)} = \sqrt{P_{\mathrm{SSB}}}\, \mathbf h^H \mathbf c_{m}^{(\phi)} \mathbf s_{\mathrm{SSB}} + \mathbf n_{\mathrm{SSB},m}^{(\phi)},
    \label{eq:ssb_received_az}
\end{equation}
where $\mathbf n_{\mathrm{SSB},m}^{(\phi)}\sim\mathcal{CN}(\mathbf 0,\sigma^2\mathbf I_{L_s})$. The corresponding azimuth-domain RSRP is estimated as
\begin{equation}
   r_{m}^{(\phi)} \triangleq \frac{1}{L_s}\sum_{l=1}^{L_s}|y_{m}^{(\phi)}[l]|^2.
    \label{eq:rsrp_def_az}
\end{equation}
The resulting azimuth-domain RSRP vector is
\begin{equation}
  \mathbf r_{\phi} = [r_{1}^{(\phi)}, \ldots, r_{M_{\phi}}^{(\phi)}]^T \in \mathbb{R}_{+}^{M_{\phi}\times 1}.
  \label{eq:obs_az}
\end{equation}

\subsubsection{Elevation Sensing}

Similarly, let $\mathcal C_{\theta}=\{\mathbf c_1^{(\theta)},\ldots,\mathbf c_{M_{\theta}}^{(\theta)}\}$ denote an elevation sensing codebook, where each codeword probes a different elevation sector while maintaining broad azimuth coverage. One representative construction is
\begin{equation}
\mathbf c_n^{(\theta)} = \bar{\mathbf a}_x \kron \mathbf a_y(\theta_n),
\end{equation}
where $\theta_n$ is a sampled elevation angle and $\bar{\mathbf a}_x = \frac{1}{\sqrt{N_x}}\mathbf 1_{N_x}$ provides broad azimuth coverage.
For the $n$-th elevation probing beam, the received SSB signal is given by
\begin{equation}
    \mathbf y_{n}^{(\theta)} = \sqrt{P_{\mathrm{SSB}}}\, \mathbf h^H \mathbf c_{n}^{(\theta)} \mathbf s_{\mathrm{SSB}} + \mathbf n_{\mathrm{SSB},n}^{(\theta)},
    \label{eq:ssb_received_el}
\end{equation}
where $\mathbf n_{\mathrm{SSB},n}^{(\theta)}\sim\mathcal{CN}(\mathbf 0,\sigma^2\mathbf I_{L_s})$. The corresponding elevation-domain RSRP is estimated as
\begin{equation}
   r_{n}^{(\theta)} \triangleq \frac{1}{L_s}\sum_{l=1}^{L_s}|y_{n}^{(\theta)}[l]|^2.
    \label{eq:rsrp_def_el}
\end{equation}
The resulting elevation-domain RSRP vector is
\begin{equation}
  \mathbf r_{\theta} = [r_{1}^{(\theta)}, \ldots, r_{M_{\theta}}^{(\theta)}]^T \in \mathbb{R}_{+}^{M_{\theta}\times 1}.
  \label{eq:obs_el}
\end{equation}

\subsubsection{Overhead Reduction and Information Loss}

A full 2D sweep preserves the joint azimuth-elevation beam fingerprint at the cost of $M_{\phi}M_{\theta}$ probing measurements \cite{noh2017multiresolution}. 
In contrast, separate sensing reduces the online overhead to $M_{\phi}+M_{\theta}$, which is much more scalable for large UPAs and short coherence intervals.
However, this overhead reduction comes at the expense of information loss, which propose more challenges for the beam generation model structure design. 

\begin{remark}
\normalfont
\emph{(Non-uniqueness of separate RSRP observations)}
The decoupled sensing vectors $\mathbf r_{\phi}$ and $\mathbf r_{\theta}$ only provide marginal power distributions in the azimuth and elevation domains, respectively, rather than the full joint angular structure of the UPA channel. As a result, multiple distinct channel realizations may induce highly similar observations $(\mathbf r_{\phi},\mathbf r_{\theta})$. For example, two channels with path directions $\{(\phi_1,\theta_1),(\phi_2,\theta_2)\}$ and $\{(\phi_1,\theta_2),(\phi_2,\theta_1)\}$ can exhibit nearly identical azimuth- and elevation-domain RSRP patterns, since they share the same azimuth and elevation marginals, while their pathwise azimuth--elevation pairings are different. Moreover, RSRP is a power-only measurement and thus discards the phase information of the multipath coefficients. Since the optimal transmit beamformer depends on the full steering vectors $\mathbf a(\phi_l,\theta_l)$ as well as the coherent combination of the complex path gains, such channels may require significantly different beamformers even when their separate RSRP observations are similar. Therefore, generate a high-fidelity beamformer from $(\mathbf r_{\phi},\mathbf r_{\theta})$ is fundamentally nontrivial, and requires additional structural priors, such as special designs in the beam generation model based on the site-specific propagation regularities.
\end{remark}


\subsection{Problem Formulation}

Our objective is to design a transmit beamforming vector that maximizes the received beamforming gain based only on the limited feedback obtained from decoupled sensing. Specifically, we seek to learn a beam generation function $f(\cdot)$ that maps the azimuth and elevation RSRP vectors, i.e., $(\mathbf r_{\phi},\mathbf r_{\theta})$, to a high-quality hardware-feasible beamformer. The problem is formulated as
\begin{subequations}
\begin{align}
\max_{f} \quad & \left|\mathbf{h}^H f\left(\mathbf r_{\phi}, \mathbf r_{\theta}\right)\right|^2 \notag \\
\text{s.t.} \quad & \left\| f\left(\mathbf r_{\phi}, \mathbf r_{\theta}\right)\right\|^2 = 1, \label{equ:w_norm} \\
& f\left(\mathbf r_{\phi}, \mathbf r_{\theta}\right) \in \mathcal W, \label{equ:w_feasible}
\end{align}
\end{subequations}
where $\mathcal W$ denotes the feasible set of analog beamformers determined by the underlying hardware architecture, most notably the constant-modulus constraint on each antenna element.

The above problem is challenging for several reasons. 
\emph{First,} the observations $\mathbf r_{\phi}$ and $\mathbf r_{\theta}$ are nonlinear noisy power measurements, rather than linear channel observations, and therefore do not contain explicit phase information. \emph{Second,} decoupled sensing provides only marginal angular information and does not preserve the joint azimuth--elevation coupling of the multipath channel. \emph{Third,} the desired beamformer lies in the nonconvex feasible set $\mathcal W$, which prevents direct application of standard convex optimization tools. As a result, the mapping from $(\mathbf r_{\phi},\mathbf r_{\theta})$ to the near-optimal beamformer is highly nonconvex and generally non-unique, making direct deterministic model-based inversion intractable.

The above observations suggest that efficient beamforming generation should exploit additional structural priors beyond instantaneous online measurements. In particular, by leveraging site-specific propagation regularities learned offline, design the special generative model for the beamform generation.
Making it becomes more possible to infer a high-fidelity beamformer from low-dimensional RSRP feedback with significantly reduced online overhead.

\section{Cross-Fused GenSSBF Framework}
\label{sec:framework}

In this section, we present the proposed cross-fused GenSSBF framework. 
We first introduce the angular-domain channel encoding representation adopted for training in Section \ref{subsec:angular_encoding}.
Then, we describe the cross-fused encoder that mitigates the information loss of separate sensing in Section \ref{subsec:cross_fusion}.
Next, we develop the conditional normalizing-flow model as the beam generator in Section~\ref{subsec:flow_generation}.

\subsection{Angular-Domain Channel Encoding}
\label{subsec:angular_encoding}

We leverage the sparse nature of far-field channels in the angular domain to perform beam generation on the DFT representation $\tilde{\mathbf{h}}=\mathcal{F}(\mathbf{h})$, where $\mathcal{F}(\cdot)$ denotes the DFT operator.
Let the $i$-th angular-domain coefficient be written as
\begin{equation}
\tilde{\mathbf h}[i] = \rho_i e^{j\varphi_i}, \qquad i=1,\ldots,N_t,
\label{eq:h_representation}
\end{equation}
where $\rho_i = |\tilde{\mathbf h}[i]|$ and $\varphi_i = \angle \tilde{\mathbf h}[i]$. Instead of using the phase $\varphi_i$ directly, we encode each coefficient by its amplitude together with the cosine-sine pair of its phase, and form the real-valued training data as follows:
\begin{equation}
\mathbf x
=
\big[
\boldsymbol{\rho}^T,\;
\cos(\boldsymbol{\varphi})^T,\;
\sin(\boldsymbol{\varphi})^T
\big]^T
\in \mathbb{R}^{3N_t},
\label{eq:beamspace_encoding}
\end{equation}
where $\boldsymbol{\rho}=[\rho_1,\ldots,\rho_{N_t}]^T$ and $\boldsymbol{\varphi}=[\varphi_1,\ldots,\varphi_{N_t}]^T$. 

The cosine-sine phase encoding is adopted to avoid the phase-wrapping issue of direct angle regression. If the phase is represented only by $\varphi_i\in[-\pi,\pi)$, two physically close angles near the branch cut, e.g., $\pi-\epsilon$ and $-\pi+\epsilon$, would appear numerically far apart, thereby introducing an artificial discontinuity into the learning target. By contrast, the pair $(\cos\varphi_i,\sin\varphi_i)$ lies on the unit circle and varies smoothly with the phase, which leads to a smoother loss landscape and more stable gradient propagation. 
In addition, the cosine-sine representation is bounded in $[-1,1]$, which further improves numerical stability. The original phase can be uniquely recovered as $\varphi_i = \mathrm{atan2}\big(\sin\varphi_i,\cos\varphi_i\big)$, and thus the phase information is preserved without ambiguity.

\subsection{Cross-Fused Encoder}
\label{subsec:cross_fusion}
 
Decoupled channel sensing reduces the online probing overhead, but it also destroys the explicit joint coupling between azimuth and elevation. 
To alleviate this information loss, we introduce a lightweight cross-fused condition encoder, whose role is to extract branch-specific features from the two one-dimensional RSRP observations and then exchange complementary information between them before forming the conditioning vector for the downstream generative beamformer.

\subsubsection{Independent Branch Embedding}

The azimuth and elevation domain RSRP vectors are first processed by two separate MLP encoders:
\begin{align}
\mathbf e_\phi = \mathcal E_\phi(\mathbf r_\phi) \in \mathbb R^{M_{\phi}\times d_e}, \label{eq:embed_azimuth}\\
\mathbf e_\theta = \mathcal E_\theta(\mathbf r_\theta) \in \mathbb R^{M_{\theta}\times d_e}, \label{eq:embed_theta}
\end{align}
where $\mathcal E_\phi(\cdot)$ and $\mathcal E_\theta(\cdot)$ are lightweight trainable encoders. 
Each row of $\mathbf e_\phi$ is a $d$-dimensional feature vector associated with one azimuth sensing beam, and similarly for $\mathbf e_\theta$. These feature matrices serve as the token sequences for the subsequent cross-attention operation.

\subsubsection{Bidirectional Cross-Attention}

To exploit the mutual dependence between the azimuth and elevation sensing branches, we adopt a bidirectional cross-attention block. The core idea is to let one branch use its own feature as a \emph{query} to retrieve complementary information from the other branch, whose feature serves as the corresponding \emph{key} and \emph{value} \cite{vaswani2017attention, tan2019lxmert}. 
In this way, the encoder can learn the latent azimuth-elevation coupling that is not explicitly available from separate sensing.

\paragraph*{Azimuth conditioned on elevation}
To refine the azimuth feature matrix using elevation information, we first project $\mathbf e_\phi$ and $\mathbf e_\theta$ into query, key, and value spaces:
\begin{align}
\mathbf{Q}_{\phi} &= \mathbf{e}_{\phi}\mathbf{W}_{Q}^{\phi}
\in \mathbb{R}^{M_{\phi}\times d_k},\\
\mathbf{K}_{\theta} &= \mathbf{e}_{\theta}\mathbf{W}_{K}^{\theta}
\in \mathbb{R}^{M_{\theta}\times d_k},\\
\mathbf{V}_{\theta} &= \mathbf{e}_{\theta}\mathbf{W}_{V}^{\theta}
\in \mathbb{R}^{M_{\theta}\times d_v}.
\end{align}
%
where $\mathbf W_Q^\phi \in \mathbb R^{d_e\times d_k}$, $\mathbf W_K^\theta \in \mathbb R^{d_e\times d_k}$, and $\mathbf W_V^\theta \in \mathbb R^{d_e\times d_v}$ are trainable projection matrices. Here, the query matrix $\mathbf Q_\phi$ specifies what information each azimuth beam feature seeks, the key matrix $\mathbf K_\theta$ determines how compatible each elevation feature is with that request, and the value matrix $\mathbf V_\theta$ contains the elevation-side information to be aggregated.
The corresponding attention weights are computed as
\begin{equation}
\mathbf A_{\phi \leftarrow \theta}
=
\softmax\!\left(
\frac{\mathbf Q_\phi \mathbf K_\theta^T}{\sqrt{d_k}}
\right)
\in \mathbb R^{M_{\phi}\times M_{\theta}},
\label{eq:attn_phi_theta}
\end{equation}
where the softmax is applied row-wise. The $(m,n)$-th entry of $\mathbf A_{\phi \leftarrow \theta}$ measures the relevance of the $n$-th elevation feature to the $m$-th azimuth feature. The refined azimuth representation is then obtained by
\begin{equation}
\tilde{\mathbf e}_\phi
=
\mathbf A_{\phi \leftarrow \theta}\mathbf V_\theta
\in \mathbb R^{M_{\phi}\times d_v}.
\label{eq:cross_phi}
\end{equation}
Equivalently, the $m$-th refined azimuth feature is a weighted combination of all elevation-side value vectors, with weights determined by their compatibility with the $m$-th azimuth query.

\paragraph*{Elevation conditioned on azimuth}
Symmetrically, the elevation branch is refined by using its own features as queries and the azimuth features as keys and values:
\begin{align}
\mathbf Q_\theta = \mathbf e_\theta \mathbf W_Q^\theta \in \mathbb R^{M_{\theta}\times d_k},\\
\mathbf K_\phi = \mathbf e_\phi \mathbf W_K^\phi \in \mathbb R^{M_{\phi}\times d_k}, \\
\mathbf V_\phi = \mathbf e_\phi \mathbf W_V^\phi \in \mathbb R^{M_{\phi}\times d_v},
\end{align}
where $\mathbf W_Q^\theta \in \mathbb R^{d_e\times d_k}$, $\mathbf W_K^\phi \in \mathbb R^{d_e\times d_k}$, and $\mathbf W_V^\phi \in \mathbb R^{d_e\times d_v}$ are trainable matrices. The attention weights and refined elevation representation are given by
\begin{equation}
\mathbf A_{\theta \leftarrow \phi}
=
\softmax\!\left(
\frac{\mathbf Q_\theta \mathbf K_\phi^T}{\sqrt{d_k}}
\right)
\in \mathbb R^{M_{\theta}\times M_{\phi}},
\label{eq:attn_theta_phi}
\end{equation}
and
\begin{equation}
\tilde{\mathbf e}_\theta
=
\mathbf A_{\theta \leftarrow \phi}\mathbf V_\phi
\in \mathbb R^{M_{\theta}\times d_v}.
\label{eq:cross_theta}
\end{equation}


After the cross-attention update, the features are further refined using residual connections, layer normalization, and branch-wise feed-forward networks:
\begin{align}
\bar{\mathbf e}_\phi &= \mathrm{LN}\!\left(\mathbf e_\phi + \tilde{\mathbf e}_\phi\right), \\
\bar{\mathbf e}_\theta &= \mathrm{LN}\!\left(\mathbf e_\theta + \tilde{\mathbf e}_\theta\right), \\
\hat{\mathbf e}_\phi &= \mathrm{LN}\!\left(\bar{\mathbf e}_\phi + \mathrm{FFN}_\phi(\bar{\mathbf e}_\phi)\right), \\
\hat{\mathbf e}_\theta &= \mathrm{LN}\!\left(\bar{\mathbf e}_\theta + \mathrm{FFN}_\theta(\bar{\mathbf e}_\theta)\right),
\label{eq:post_attn}
\end{align}
where $\mathrm{FFN}_\phi(\cdot)$ and $\mathrm{FFN}_\theta(\cdot)$ denote two lightweight branch-wise feed-forward networks.

\subsubsection{Condition Vector Formation}

The refined branch embeddings are concatenated and mapped to a condition vector:
\begin{equation}
\mathbf c
=
\mathcal E_c\!\left(
[\hat{\mathbf e}_\phi^T,\hat{\mathbf e}_\theta^T]^T
\right)
\in \mathbb R^{d_c},
\label{eq:latent}
\end{equation}
where $\mathcal E_c(\cdot)$ is a fusion MLP and $d_c$ is the condition dimension. The resulting vector $\mathbf c$ summarizes the site-specific two-dimensional angular structure inferred from the separate sensing observations. It also serves as the conditioning input to the subsequent normalizing-flow beamformer generator.

\subsection{Conditional Normalizing Flow Model}
\label{subsec:flow_generation}

Due to the ambiguity of decoupled sensing, the mapping from the fused condition vector $\mathbf c$ to the final beamformer is generally one-to-many. 
Therefore, a deterministic predictor tends to produce a compromised solution when multiple beam directions are possible under the same observation. To address this issue, we adopt a \emph{conditional normalizing flow} as the beam generation module. The flow serves as a stochastic generator that maps a simple latent Gaussian variable to a complex conditional distribution over beam-space representations, thereby enabling the generation of multiple high-quality beam candidates from the same sensing observation.

Specifically, let $\mathbf x \in \mathbb R^{3N_t}$ denote the real-valued beam-space representation introduced in \eqref{eq:beamspace_encoding}, and let $\mathbf z \in \mathbb R^{3N_t}$ be a latent random vector drawn from the standard Gaussian prior $\mathcal N(\mathbf 0,\mathbf I_{3N_t})$. We employ a conditional normalized flow model with the following forward transform \cite{real-nvp}:
\begin{equation}
\mathbf z = \Psi_{\boldsymbol{\xi}}(\mathbf x;\mathbf c).
\end{equation}
The corresponding inverse transform is given by
\begin{equation}
\mathbf x = G_{\boldsymbol{\xi}}(\mathbf z,\mathbf c) = \Psi_{\boldsymbol{\xi}}^{-1}(\mathbf z;\mathbf c),
\label{eq:flow_inverse_gen}
\end{equation}
where $\boldsymbol{\xi}$ collects the trainable flow parameters. The inverse transform in \eqref{eq:flow_inverse_gen} is used for beam generation, while the forward transform provides an exact density model
\begin{equation}
\log p_{\boldsymbol{\xi}}(\mathbf x \mid \mathbf c)
=
\log p(\mathbf z)
+
\sum_{\ell=1}^{L_{\mathrm{flow}}}
\log
\left|
\det
\frac{\partial \Psi_{\ell}}{\partial \mathbf u^{(\ell-1)}}
\right|,
\label{eq:flow_logprob}
\end{equation}
where $L_{\mathrm{flow}}$ is the number of coupling layers, $\mathbf u^{(0)}=\mathbf x$, $\mathbf u^{(L_{\mathrm{flow}})}=\mathbf z$, and $\Psi_{\ell}$ denotes the $\ell$-th invertible transformation. Although the model admits exact log-likelihood evaluation, in the proposed framework this property is used primarily to ensure stable invertible generation rather than to define the final training objective.

\begin{figure}[t]
\begin{center}
\setlength{\abovecaptionskip}{0cm}
\includegraphics[width=6cm, trim = 0cm 0cm 0cm 0cm, clip = true]{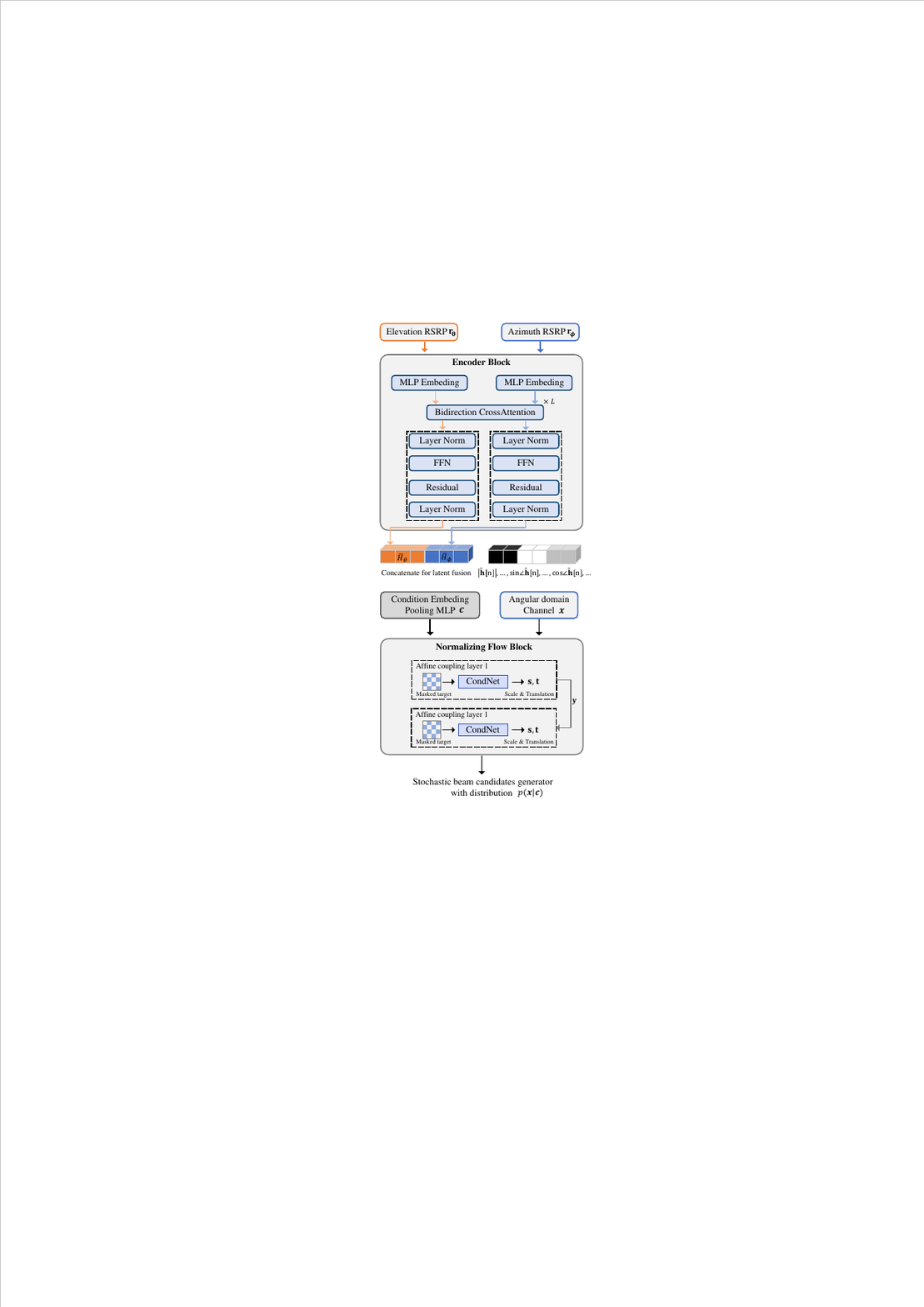}
\caption{\footnotesize{Architecture of the proposed cross-fused GenSSBF framework.}}
\label{fig:model_arc}
\end{center}
\end{figure}

\subsubsection{Affine Coupling Layers}

Each flow layer is implemented by a conditional affine coupling transformation. 
For the $\ell$-th layer, let $\mathbf m^{(\ell)} \in \{0,1\}^{3N_t}$ denote a binary mask, and define the two complementary subvectors
\begin{align}
\mathbf x_A^{(\ell)} = \mathbf m^{(\ell)} \odot \mathbf x^{(\ell-1)}, \\
\mathbf x_B^{(\ell)} = \left(\mathbf 1-\mathbf m^{(\ell)}\right)\odot \mathbf x^{(\ell-1)},
\end{align}
where $\odot$ denotes the Hadamard product. The masked part $\mathbf x_A^{(\ell)}$ is kept unchanged, while the complementary part $\mathbf x_B^{(\ell)}$ is updated through an affine transformation conditioned on both $\mathbf x_A^{(\ell)}$ and the condition sensing feature $\mathbf c$:
\begin{align}
\mathbf y_A^{(\ell)} &= \mathbf x_A^{(\ell)}, \\
\mathbf y_B^{(\ell)} &= \mathbf x_B^{(\ell)} \odot \exp\!\big(\mathbf s^{(\ell)}\big) + \mathbf t^{(\ell)},
\label{eq:affine_coupling_gen}
\end{align}
where
\begin{equation}
[\mathbf s^{(\ell)},\mathbf t^{(\ell)}]
=
\mathrm{CondNet}^{(\ell)}
\!\left(
\left[
\mathbf x_A^{(\ell)}[\mathcal I_A^{(\ell)}]^T,\;
\mathbf c^T
\right]^T
\right).
\label{eq:condnet_gen}
\end{equation}
Here, $\mathcal I_A^{(\ell)}$ denotes the active index set selected by the mask, and $\mathrm{CondNet}^{(\ell)}(\cdot)$ can be a lightweight MLP. More particularly, the unchanged coordinates provide a partial observation of the current sample, while the condition vector $\mathbf c$ injects the site-specific sensing information. Then, their concatenation is used to predict the scale and translation parameters for the remaining coordinates.
%
%
Thus, the inverse transformation is
\begin{align}
\mathbf x_A^{(\ell)} &= \mathbf y_A^{(\ell)}, \\
\mathbf x_B^{(\ell)} &= \left(\mathbf y_B^{(\ell)} - \mathbf t^{(\ell)}\right)\odot \exp\!\big(-\mathbf s^{(\ell)}\big),
\end{align}
which guarantees exact sample generation and efficient backpropagation through the reparameterized sampling path.

\subsubsection{Jacobian Determinant}

A major advantage of affine coupling layers is that the Jacobian matrix is triangular. Therefore, its determinant can be computed exactly without explicit matrix inversion. For the $\ell$-th layer, we have
\begin{equation}
\det
\frac{\partial \mathbf y^{(\ell)}}{\partial \mathbf x^{(\ell-1)}}
=
\prod_{j \in \mathcal I_B^{(\ell)}} \exp\!\left(s_j^{(\ell)}\right),
\end{equation}
which leads to
\begin{equation}
\log
\left|
\det
\frac{\partial \mathbf y^{(\ell)}}{\partial \mathbf x^{(\ell-1)}}
\right|
=
\sum_{j \in \mathcal I_B^{(\ell)}} s_j^{(\ell)}.
\label{eq:logdet_gen}
\end{equation}
This exact log-determinant computation is one of the key reasons for adopting the normalized flow model architecture.

\subsubsection{Scale Clamping}

To further improve numerical stability, the scale output is clamped by
\begin{equation}
\mathbf s^{(\ell)} = 2\tanh\!\left(\mathbf s_{\mathrm{raw}}^{(\ell)}\right),
\label{eq:scale_clamp_gen}
\end{equation}
which ensures $s_j^{(\ell)} \in [-2,2]$ for all updated coordinates. Consequently, the multiplicative factor $\exp(s_j^{(\ell)})$ is bounded within $[e^{-2},e^2]$, preventing unstable amplification or attenuation across successive coupling layers.

\subsubsection{Alternating Mask Schedule}

To ensure that every coordinate of $\mathbf x$ is updated by the flow, we adopt an alternating binary mask schedule across layers. 
For the $\ell$-th coupling layer, the mask is defined over the flattened vector coordinates as
\begin{equation}
\!\!
\text{mask}^{(\ell)}[n']
\!=\!
\begin{cases}
1, \!\!\!\!  & \mathrm{mod}(n'\!+\!\ell,2)=0,\\
0, \!\!\!\!  & \text{otherwise},
\end{cases}
n'=1,\ldots,3N_t.
\label{eq:alt_mask}
\end{equation}
Therefore, the coordinates that remain unchanged in one layer are updated in the next layer. After a small number of layers, all entries of the beam-space representation have interacted through the nonlinear conditional transformations, enabling effective mixing across amplitude, cosine-phase, and sine-phase components.

With the above design, conditioned on the fused sensing feature $\mathbf c$, the BS generates multiple latent vectors $\{\mathbf z_k\}$ from the standard Gaussian prior, and maps them through \eqref{eq:flow_inverse_gen} to obtain the corresponding beam-space candidates $\{\mathbf x_k\}$, which are subsequently decoded into feasible transmit beamformers.

\section{Task-Oriented Training and Online Beam Prediction}
\label{sec:training_infer}

In this section, we present the training and deployment procedures of the proposed GenSSBF framework. 
We first introduce the normalized gain loss and explain its connection to the generative design in Section \ref{subsec:gain_training}. 
Then, we present the candidate beam selection mechanism in Section \ref{subsec:online_selection}.
Finally, we summarize the complete offline training and online inference procedures in Section \ref{subsec:algorithm}.

\subsection{Task-Oriented Training}
\label{subsec:gain_training}

Let $\mathcal D=\{(\mathbf r_{\phi,d},\mathbf r_{\theta,d},\mathbf x_d)\}_{d=1}^{D}$ denote the site-specific training set, where $\mathbf x_d$ is the real-valued representation of  $\mathbf h_d \in \mathbb C^{N_t\times 1}$, which is the downlink channel corresponding to the $d$-th user location, and $(\mathbf r_{\phi,d},\mathbf r_{\theta,d})$ are the associated separate azimuth and elevation domain RSRP observations. For a mini-batch of size $B$, the cross-fused encoder first maps the RSRP to the condition vector $\mathbf c_b$, $b=1,\ldots,B$. Based on $\mathbf c_b$, the flow generator produces $K$ beam-space candidates according to
\begin{equation}
\hat{\mathbf x}_{b,k} = G_{\boldsymbol{\xi}}(\mathbf z_{b,k},\mathbf c_b),
\qquad
\mathbf z_{b,k}\sim \mathcal N(\mathbf 0,\mathbf I_{3N_t}),
\label{eq:train_sample_gen}
\end{equation}
where $\mathbf z_{b,k}$ is the Gaussian vector. Each generated $\hat{\mathbf x}_{b,k}$ is then decoded as the spatial-domain beam vector $\hat{\mathbf w}_{b,k}\in\mathcal W$, satisfying $\|\hat{\mathbf w}_{b,k}\|^2=1$.
The decoding procedure is from the angular domain to the real-complex domain, based on \eqref{eq:h_representation} and \eqref{eq:beamspace_encoding}.
The training objective is to maximize the best gain among the $K$ generated candidates.
Therefore, the loss function can be defined as
\begin{equation}
\mathcal L_{\rm gain}
=
1
-
\frac{1}{B}
\sum_{b=1}^{B}
\max_{1\le k\le K}
g\!\left(\hat{\mathbf w}_{b,k},\mathbf h_b\right).
\label{eq:gain_loss}
\end{equation}
where 
\begin{equation}
g(\hat{\mathbf w}_{b,k},\mathbf h_b) \triangleq \frac{|\mathbf h_b^H \hat{\mathbf w}_{b,k}|^2} {\|\mathbf h_b\|^2}
\label{eq:gain}
\end{equation}
is the normalized beamforming gain.

The role of $K$ is to control the candidate diversity during training. When $K=1$, the model generates only one beam candidate for each condition, and the parameter update is determined by the gain of this single random sample. In this case, the training gradient may have high variance, since an occasionally poor sample can lead to an overly pessimistic loss even if the conditional generator has learned useful high-gain modes. When $K>1$, multiple candidates are sampled from the similar conditional distribution, and only the candidate with the largest beamforming gain contributes to the loss. Therefore, the model is encouraged to assign probability mass to high-quality beamforming modes, rather than merely producing an average or compromised beamformer. This best-of-$K$ design is particularly suitable for decoupled channel sensing, where the similar RSRP observations may correspond to multiple beam directions.

\begin{algorithm}[t]
\caption{Offline Training of Cross-fused GenSSBF}
\label{alg:training}
\textbf{Input:} Training dataset $\mathcal D=\{(\mathbf r_{\phi,d},\mathbf r_{\theta,d},\mathbf x_d)\}_{d=1}^{D}$, batch size $B$, learning rate $\eta$, number of epochs $I$, generation times $K$ \\
\textbf{Output:} The trained model $\Phi^\star$ comprises cross-fused encoder and conditional flow generator
\begin{algorithmic}[1]
\STATE Model parameter initialization $\Phi^{(0)}$.
\FOR{$i = 1,2,\ldots,I$}
        \STATE Sample a mini-batch of data $\mathcal{B}_i \subset \mathcal{D}$ with $|\mathcal{B}_i| = B$.
        \STATE Compute condition vectors $\{\mathbf c_b\}_{b=1}^{B}$ from the separate RSRP observations $(\mathbf r_{\phi,d},\mathbf r_{\theta,d})\in \mathcal B_i$ by \eqref{eq:embed_azimuth}-\eqref{eq:latent}.
        \FOR{$k = 1,2,\ldots,K$}
            \STATE Define latent Gaussian vectors $\{\mathbf z_{b,k}\}_{b=1}^{B}$.
            \STATE Generate beam-space candidates $\hat{\mathbf x}_{b,k}$ by \eqref{eq:train_sample_gen}.
            \STATE Decode $\hat{\mathbf x}_{b,k}$ to obtain feasible beamformers $\hat{\mathbf w}_{b,k}$.
            \STATE Evaluate normalized beamforming gains $g(\hat{\mathbf w}_{b,k},\mathbf h_b)$.
        \ENDFOR
        \STATE Compute the mini-batch loss $\mathcal L_{\rm gain}$ by \eqref{eq:gain_loss}.
        \STATE Update $\Phi$ by
        $
        \Phi^{(i)} \leftarrow \Phi^{(i-1)} - \eta^{(i)} \nabla_{\Phi}\mathcal L_{\rm gain}.
        $
\ENDFOR
\STATE The trained model is obtained as $\Phi^\star=\Phi^{(I)}$.
\end{algorithmic}
\end{algorithm}

\textit{Training Objective:} 
The proposed framework can be trained using either the negative normalized beamforming gain in \eqref{eq:gain_loss} or the negative log-likelihood (NLL) induced by \eqref{eq:flow_logprob}. 
In this work, we adopt the gain-based loss due to two reasons.
1) NLL penalizes discrepancies over the entire beam-space target representation. Hence, representation-level perturbations, such as small amplitude errors or phase deviations in non-dominant components, may incur a noticeable likelihood penalty. While this may have only limited impact on the final beamforming gain. 
2) Decoupled sensing generally induces a multimodal conditional mapping from RSRP to the beamformer. As a result, multiple distinct beamformers may be consistent with the similar sensing observation. 
In such a case, the practical objective depends mainly on whether the generated candidate set covers at least one high-gain mode. 
This motivates the following high-gain coverage interpretation.

\begin{proposition} \label{prop:high-gain}
\normalfont
\emph{(High-gain coverage under finite candidate generation)}
For fixed condition $\mathbf c$ and channel $\mathbf h$, let $q_{\boldsymbol{\xi}}(\hat{\mathbf w}\mid \mathbf c)$ denote the beamformer distribution induced by the conditional generator. For any $\epsilon\in(0,1)$, we define the high-gain set as
\begin{equation}
\mathcal A_{\epsilon}(\mathbf c,\mathbf h)
\triangleq
\left\{
\hat{\mathbf w}\in\mathcal W:
g(\hat{\mathbf w},\mathbf h)\ge 1-\epsilon
\right\}.
\label{eq:high_gain_set}
\end{equation}
If $K$ beamformers are independently generated from $q_{\boldsymbol{\xi}}(\hat{\mathbf w}\mid \mathbf c)$, the probability that at least one candidate belongs to $\mathcal A_{\epsilon}(\mathbf c,\mathbf h)$ is
\begin{equation}
P_{\mathrm{succ}}^{(K)}(\mathbf c,\mathbf h;\epsilon)
=
1-
\left(
1-
q_{\boldsymbol{\xi}}\!\left(
\mathcal A_{\epsilon}(\mathbf c,\mathbf h)
\,\middle|\,
\mathbf c
\right)
\right)^K.
\label{eq:success_probability}
\end{equation}
\end{proposition}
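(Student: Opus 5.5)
The plan is a direct complement-and-independence computation, preceded by one measurability check. Fix the condition $\mathbf c$ and the channel $\mathbf h$. Write $p \triangleq q_{\boldsymbol{\xi}}(\mathcal A_{\epsilon}(\mathbf c,\mathbf h)\mid \mathbf c)$ for the probability mass that the generator assigns to the high-gain set in \eqref{eq:high_gain_set}. By construction, each candidate is $\hat{\mathbf w}_k = \mathrm{Dec}(G_{\boldsymbol{\xi}}(\mathbf z_k,\mathbf c))$. Here $\mathbf z_k \sim \mathcal N(\mathbf 0,\mathbf I_{3N_t})$ are drawn i.i.d., and $\mathrm{Dec}(\cdot)$ is the decoding map from the angular-domain representation \eqref{eq:beamspace_encoding} to a feasible beamformer in $\mathcal W$. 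With $\mathbf c$ fixed, the candidates $\hat{\mathbf w}_1,\ldots,\hat{\mathbf w}_K$ are measurable functions of independent latents. They are therefore i.i.d. with common law $q_{\boldsymbol{\xi}}(\cdot\mid\mathbf c)$, which is exactly the hypothesis of Proposition~\ref{prop:high-gain}.

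\textbf{Step 1 (measurability).} I would first confirm that $\mathcal A_{\epsilon}(\mathbf c,\mathbf h)$ is a measurable event, so that $p$ is well defined. The map $\hat{\mathbf w}\mapsto g(\hat{\mathbf w},\mathbf h) = |\mathbf h^H\hat{\mathbf w}|^2/\|\mathbf h\|^2$ in \eqref{eq:gain} is continuous, assuming $\mathbf h\neq\mathbf 0$. Hence $\mathcal A_{\epsilon}$ is the preimage of the closed set $[1-\epsilon,1]$, intersected with $\mathcal W$, and is Borel. This makes each indicator $\mathbb 1\{\hat{\mathbf w}_k\in\mathcal A_\epsilon\}$ a Bernoulli$(p)$ random variable.

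\textbf{Step 2 (complement event).} The event "at least one candidate is in $\mathcal A_\epsilon$" is the complement of $\bigcap_{k=1}^K\{\hat{\mathbf w}_k\notin\mathcal A_\epsilon\}$. By independence of the candidates,
\begin{equation*}
\Pr\Big(\bigcap_{k=1}^K\{\hat{\mathbf w}_k\notin\mathcal A_\epsilon\}\Big)=\prod_{k=1}^K\big(1-p\big)=(1-p)^K,
\end{equation*}
and taking the complement gives \eqref{eq:success_probability}.

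\textbf{Step 3 (monotonicity remarks).} As an addendum matching the surrounding discussion, I would note that $P_{\mathrm{succ}}^{(K)}$ is nondecreasing in both $p$ and $K$, and tends to $1$ as $K\to\infty$ whenever $p>0$. This links the result to the best-of-$K$ loss \eqref{eq:gain_loss}, since $\max_k g(\hat{\mathbf w}_k,\mathbf h)\ge 1-\epsilon$ holds exactly on the success event.

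The only real subtlety is the independence claim. During training, the same $\mathbf c$ is shared across the $K$ draws, so the candidates are independent only \emph{conditionally} on $\mathbf c$ (and $\mathbf h$). This is why the statement fixes both. I would state this explicitly rather than treat it as an obstacle; the rest is elementary.
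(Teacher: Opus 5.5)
Your proposal is correct and takes essentially the same approach as the paper, whose one-line proof is exactly your Step 2: the complement of the event that none of the $K$ independent candidates lies in $\mathcal A_{\epsilon}(\mathbf c,\mathbf h)$. Your measurability check (which implicitly assumes $\mathbf h\neq\mathbf 0$, as the paper does) and your remark that independence holds conditionally on $\mathbf c$ and $\mathbf h$ are harmless clarifications that the paper leaves implicit.
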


\begin{proof}
Since the $K$ candidates are independently generated, \eqref{eq:success_probability} follows directly from the complement probability that none of them belongs to $\mathcal A_{\epsilon}(\mathbf c,\mathbf h)$.
\end{proof}

\textbf{Proposition} \ref{prop:high-gain} shows that the key quantity is the probability mass assigned to the high-gain region. 
In other words, successful beam prediction depends primarily on whether the generator can produce beamformers that are effective for the final beam alignment task.
The normalizing flow model works as a stochastic candidate generator rather than as a likelihood estimator.
During training, the best candidate is identified using the ground-truth channel.
Whereas during deployment, it is selected through short pilot-based measurements. 


\begin{algorithm}[t]
\caption{Online Beam Prediction and Selection}
\label{alg:inference}
\textbf{Input:} Trained cross-fused GenSSBF model $\Phi^\star$\\
\textbf{Output:} Final beamformer $\mathbf w^\star$
\begin{algorithmic}[1]
\STATE BS sweeps the azimuth sensing codebook $\mathcal C_{\phi}$ and the elevation sensing codebook $\mathcal C_{\theta}$ with SSB.
\STATE UE acquires the separate RSRP observations $(\mathbf r_\phi,\mathbf r_\theta)$.
\STATE \textit{$\triangleright$ Based on the trained model $\Phi^\star$:}
\STATE \hspace{1em} Compute condition vector $\mathbf{c}$ from $(\mathbf{r}_\phi, \mathbf{r}_\theta)$.
\STATE \hspace{1em} Generate the set of latent Gaussian vectors $\{\mathbf{z}_k\}$.
\STATE \hspace{1em} Obtain the beam-space candidate $\hat{\mathbf x}_k=G_{\boldsymbol{\xi}}(\mathbf z_k,\mathbf c)$.
    \STATE Decode $\{\hat{\mathbf x}_k\}$ to obtain the candidate beamformer $\{\hat{\mathbf w}_k\}$.
\STATE UE feeds back the measured RSRP vector $\{r_k\}$.
\STATE BS applies the final beamforming vector $\mathbf w^\star$ by \eqref{eq:best_candidate_k}.
\end{algorithmic}
\end{algorithm}

\subsection{Online Candidate Verification and Beam Selection}
\label{subsec:online_selection}

For the $k$-th candidate beamformer $\hat{\mathbf w}_k$, the received SSB signal of the UE is given by
\begin{equation}
    \mathbf y_{k} = \sqrt{P_{\mathrm{SSB}}}\, \mathbf h^H \hat{\mathbf w}_k \mathbf s_{\mathrm{SSB}} + \mathbf n_{\mathrm{SSB}},
    \label{eq:ssb_received_el}
\end{equation}
The corresponding RSRP can be represented as
\begin{equation}
   r_{k} \triangleq \frac{1}{L_s}\sum_{l=1}^{L_s}|y_{k}[l]|^2.
    \label{eq:rsrp_def_el}
\end{equation}
Based on these measurements, the UE feeds back only the index of the best candidate
\begin{equation}
k^\star = \argmax_{1\le k\le K}r_k.
\label{eq:best_candidate_k}
\end{equation}
The final beam used for downlink transmission is $\mathbf w^\star = \hat{\mathbf w}_{k^\star}$.

\subsection{Offline Training and Online Inference Procedures}
\label{subsec:algorithm}

The complete offline training and online inference procedures of the proposed cross-fused GenSSBF framework are summarized in \textbf{Algorithms}~\ref{alg:training} and \ref{alg:inference}, respectively. 
For ease of representation, we define $\Phi^\star$ as the trained model, which comprises both the cross-fused condition encoder and the conditional normalized flow generator.

In the offline training stage, the site-specific dataset $\mathcal D=\{(\mathbf r_{\phi,d},\mathbf r_{\theta,d},\mathbf x_d)\}_{d=1}^{D}$ with $(\mathbf r_{\phi,d},\mathbf r_{\theta,d})$ as the separate azimuth and elevation domain RSRP observations, and $\mathbf x_d$ as the corresponding supervision sample used for beam generation. During each training iteration $i$, a mini-batch data $\mathcal{B}_i$ is first drawn from $\mathcal D$, and the separate RSRP observations are mapped to condition vectors through the cross-fused encoder. Conditioned on these features, the flow model generates $K$ beam-space candidates from independent latent Gaussian vectors. 
Then, the generated candidates are decoded into feasible beamformers, and their normalized beamforming gains are evaluated. 
Finally, the model parameters are updated by minimizing the task-oriented loss.
It enables the model to generate a compact candidate set that contains at least one high-gain beamformer from the individual RSRP observations.

In the online inference stage, the BS first performs separate beam sweeping over the azimuth sensing codebook $\mathcal C_{\phi}$ and the elevation sensing codebook $\mathcal C_{\theta}$ through SSB transmission. Based on the received pilots, the UE acquires the separate RSRP observations $(\mathbf r_\phi,\mathbf r_\theta)$ and feeds them to the trained model $\Phi^\star$. 
Then, the cross-fused encoder then computes the condition vector $\mathbf c$, after which the conditional flow generator produces a set of stochastic beam-space candidates from sampled latent Gaussian vectors. These candidates are decoded into beamforming vectors. The UE reports the measured candidate RSRP values, and the BS selects the final beamformer according to \eqref{eq:best_candidate_k}. 
In this way, the proposed cross-fused GenSSBF model achieves robust beam prediction under ambiguous separate sensing RSRP observations.

\section{Simulation Results}
\label{sec:simulation}

In this section, we present the numerical results to evaluate the performance of our proposed cross-fused GenSSBF framework. 
We first describe the basic simulation settings in Section \ref{subsec:simulation_setting}.
Then, we discuss the baseline schemes in Section \ref{subsec:baseline}. 
Finally, we demonstrate and compare the generation performance of the cross-fused GenSSBF scheme with the baseline schemes in Section \ref{subsec:performance}.

\begin{table}[t]
\caption{System Parameters}
\label{tab:system parameters}
\centering
\begin{tabular}{ll}
\hline
{\textbf{Parameter}} & {\textbf{Value}}\\
\hline
\hline
{{BS antenna, $N_x\times N_y$}} & {$16\times 16$ UPA}\\
{{BS full  DFT codebook size}} & {$32\times 32$}\\
{Antenna element} & Isotropic\\
{Antenna spacing, $d$} & $\lambda/2$\\
{Carrier frequency, $f_c$} & $28$ GHz\\
{Number of paths, $L$} & $5$\\ 
{Transmit power $P_t$} & $40$ dBm\\
{Candidate beam number in training $K$} & $8$\\
{Number of SSB symbols, $L_s$} & $5$\\
{Training epoch, $I$} & $200$\\
{Cosine annealing learning rate, $\eta_{\max}$} & $2\times 10^{-3}$\\
{RSRP embedding dimension, $d_e$} & $32$\\
{Cross-attention layers, heads} & $2$, $4$\\ 
{Condition embedding dimension, $d_c$} & $64$\\
{Affine coupling layers, $L_{\mathrm{flow}}$} & $2$\\
{Hidden dim in CondNet} & $16$\\
\hline
\end{tabular}
\end{table}
%

\subsection{Simulation Settings}
\label{subsec:simulation_setting}

\subsubsection{Downlink beam prediction setup}
We consider a narrowband downlink communication system, where the BS is equipped with a $16\times 16$ UPA and each UE is equipped with a single antenna. For controlled validation, we first adopt a sparse geometric channel model with $L=5$ dominant propagation paths per UE. The azimuth and elevation angles are independently generated according to $\phi \sim \mathcal{U}(-\pi,\pi)$ and $\theta \sim \mathcal{U}(-\pi/2,\pi/2)$, respectively.

\subsubsection{Evaluation scenarios}
To evaluate the proposed framework under realistic site-specific environments, we use the DeepMIMO dataset \cite{deepmimo}.
It provides ray-tracing-based channel realizations for wireless learning and beam management studies.\footnote{Scenario description is available at \url{https://www.deepmimo.net/scenarios}.} Specifically, we consider the I2\_28, O1B\_28, and Boston5G\_28 scenarios. The I2\_28 scenario represents a 28 GHz indoor environment without LoS propagation. The O1B\_28 scenario corresponds to a more challenging outdoor environment with stronger NLoS effects. The Boston5G\_28 scenario considers an urban environment with richer multipath propagation and more complex blockage conditions. These scenarios jointly cover LoS-dominant, blockage-limited, and dense urban propagation environments in the mmWave band.

\subsubsection{Learning model}
We adopt a conditional Real-NVP model as the beamformer generator in the proposed framework, with a total of 185,024 trainable parameters \cite{real-nvp}. Each parameter is stored as a 32-bit floating-point value. The model is trained using a cosine-annealing learning-rate schedule with initial learning rate $\eta_{\max}=2\times 10^{-3}$ and minimum learning rate $\eta_{\min}=1\times 10^{-5}$. For each scenario, $80\%$ of the channel realizations are used for training and the remaining $20\%$ are used for testing. The detailed system parameters and training hyperparameters are summarized in Table~\ref{tab:system parameters}.

\begin{figure}[t]
\centering
\setlength{\abovecaptionskip}{0cm}
\includegraphics[width=8.5cm, trim = 0cm 0cm 0cm 0cm, clip = true]{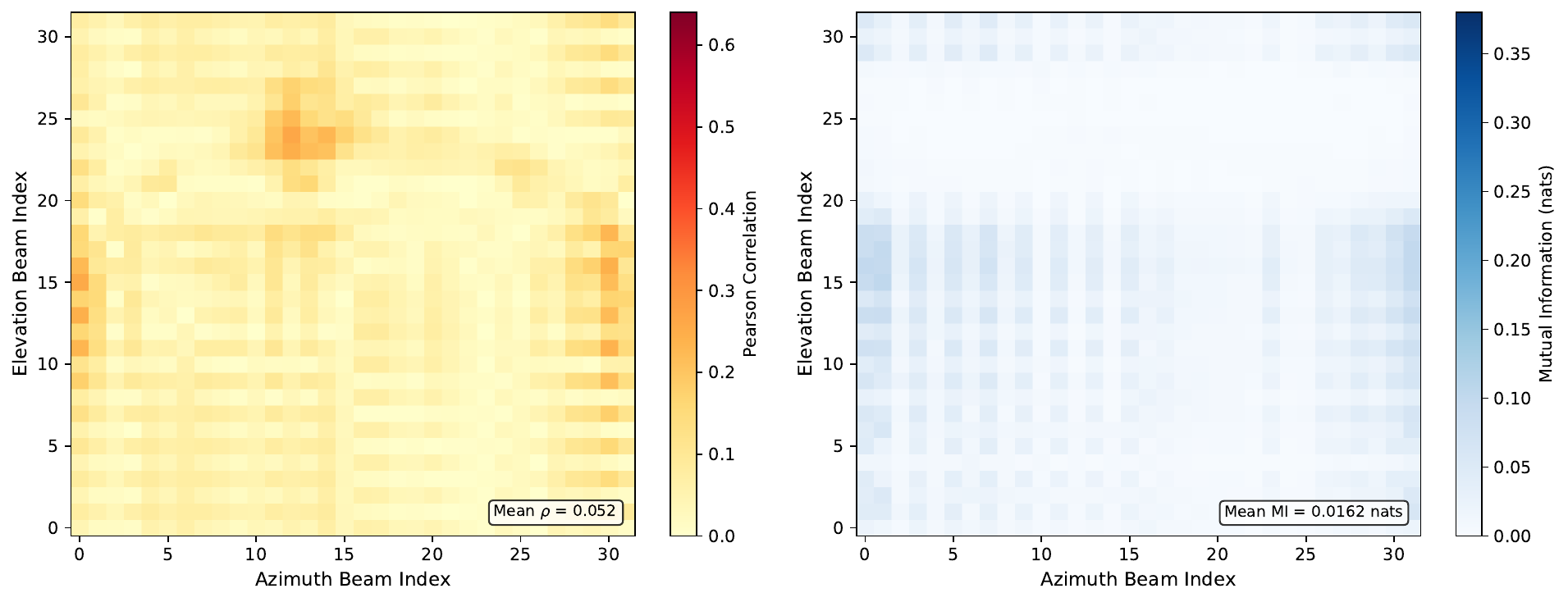}
\caption{\footnotesize{Cross-correlation between the separate azimuth and elevation domain RSRP observations before cross-attention in the O1B\_28 scenario.}}
\label{fig:rsrp_corr_before_o1b28}
\end{figure}

\begin{figure}[t]
\centering
\setlength{\abovecaptionskip}{0cm}
\includegraphics[width=8.5cm, trim = 0cm 0cm 0cm 0cm, clip = true]{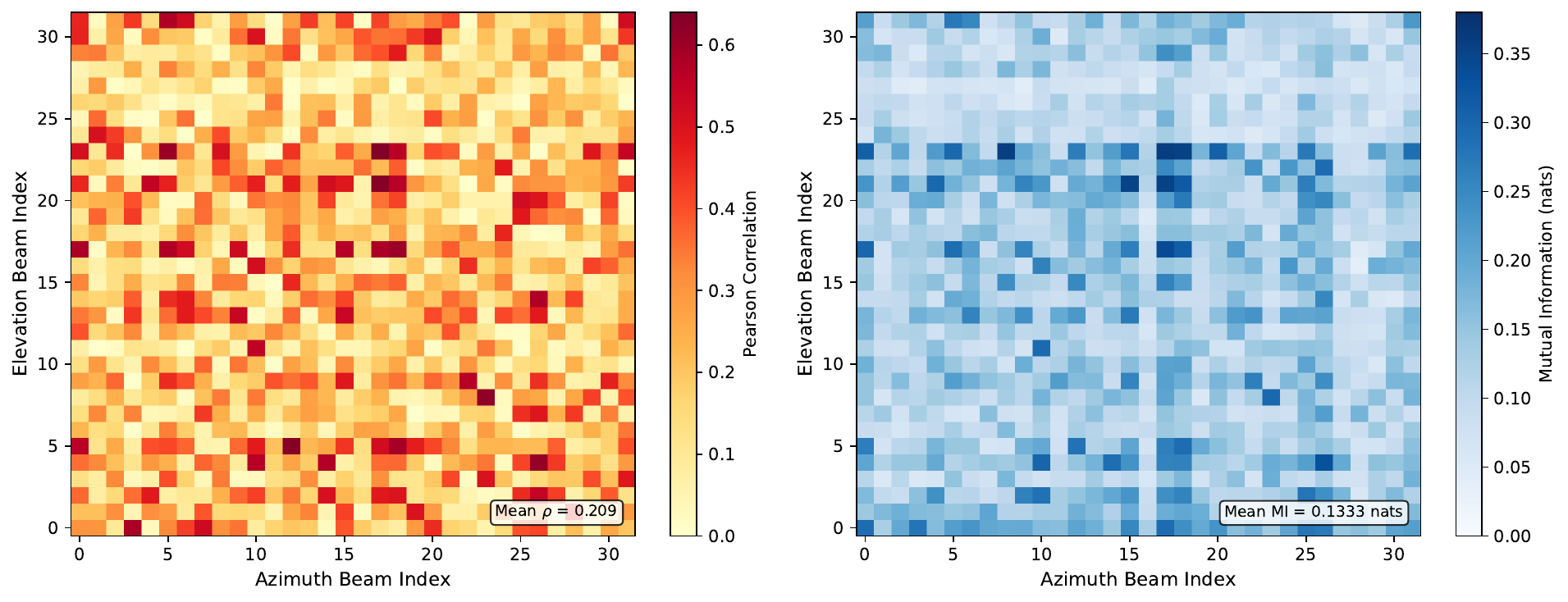}
\caption{\footnotesize{Cross-correlation between the separate azimuth and elevation domain RSRP observations after cross-attention in the O1B\_28 scenario.}}
\label{fig:rsrp_corr_after_o1b28}
\end{figure}

\begin{figure*}[htbp]
\centering
\setlength{\abovecaptionskip}{0cm}
\begin{minipage}[t]{0.48\linewidth}
\centering
\quad
\includegraphics[width=7.6cm, trim = 0cm 0cm 0cm 0cm, clip = true]{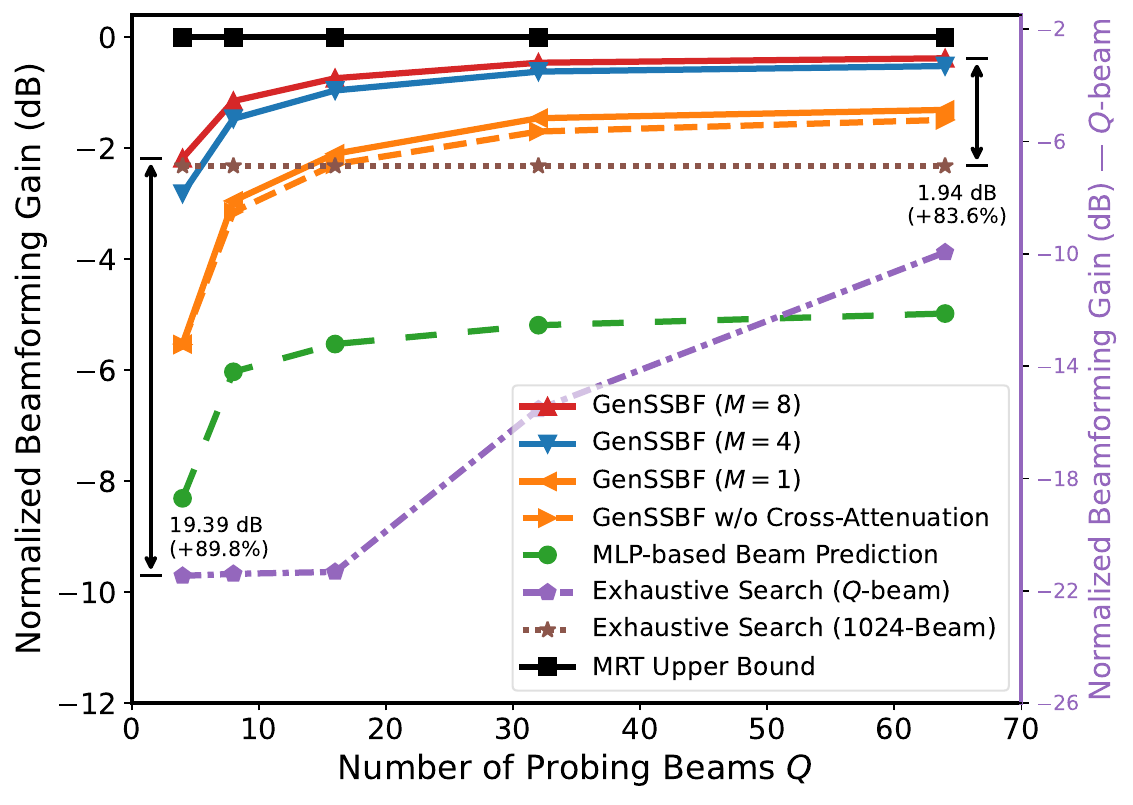}
\caption{\footnotesize{Normalized beamforming gain v.s. the number of probing beams $Q$ in the I2\_28B scenario.}}
\label{fig:I2-28}
\end{minipage}
\quad
\begin{minipage}[t]{0.48\linewidth}
\centering
\includegraphics[width=7cm, trim = 0cm 0cm 0cm 0cm, clip = true]{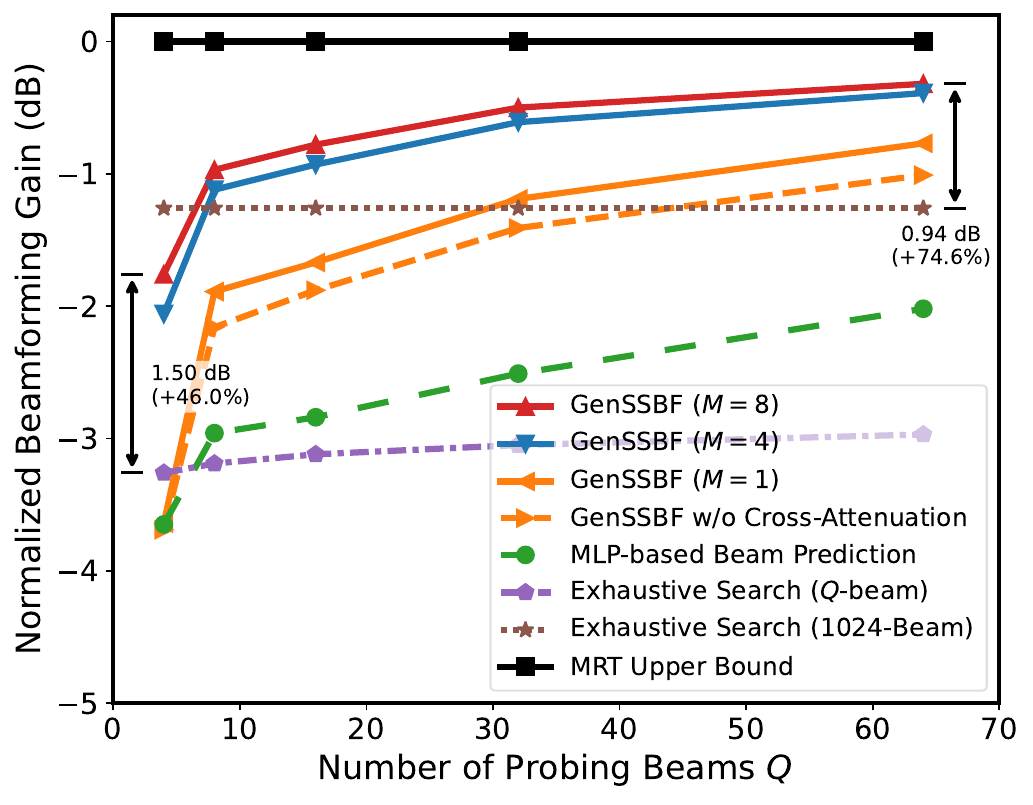}
\caption{\footnotesize{Normalized beamforming gain v.s. the number of probing beams $Q$ in the O1B\_28 scenario.}}
\label{fig:O1B_28}
\end{minipage}
\end{figure*}

\begin{figure}[t]
\begin{center}
\setlength{\abovecaptionskip}{0cm}
\includegraphics[width=7cm, trim = 0cm 0cm 0cm 0cm, clip = true]{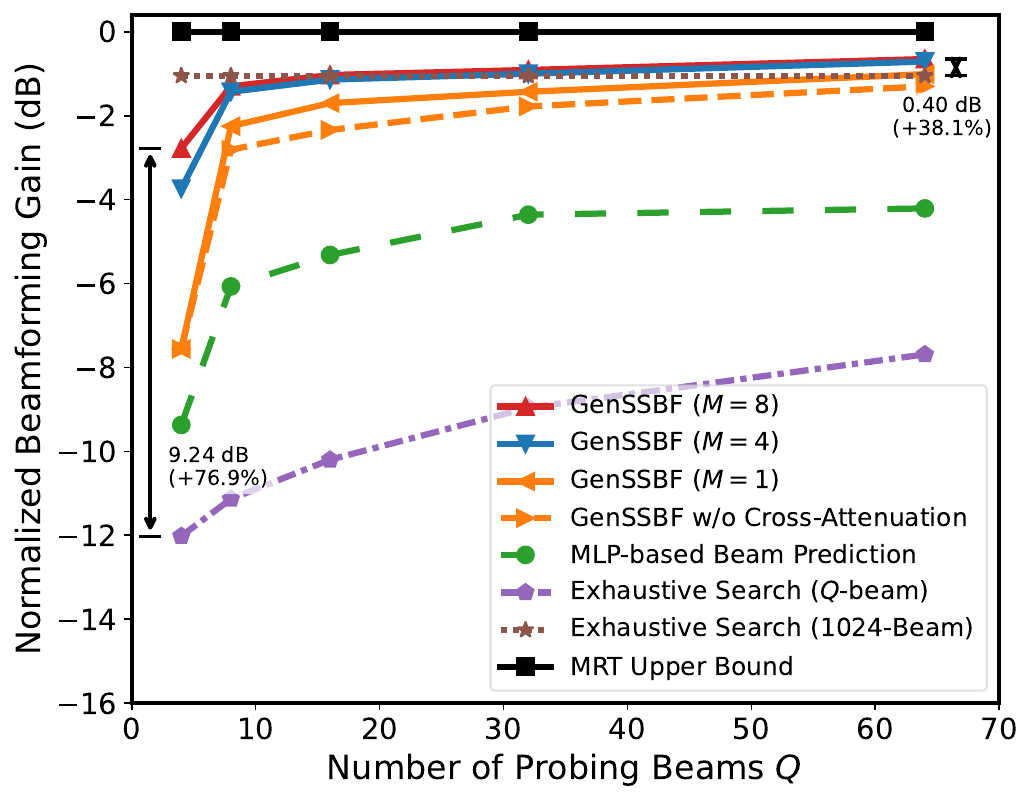}
\caption{\footnotesize{Normalized beamforming gain v.s. the number of probing beams $Q$ in the Boston5G\_28 scenario.}}
\label{fig:Boston5G_28}
\end{center}
\end{figure}

\subsection{Benchmark Schemes}
\label{subsec:baseline}

To evaluate the effectiveness of the proposed GenSSBF framework, we compare it with the following benchmarks.

\begin{itemize}
\item \textbf{GenSSBF w/o Cross-Attention:} This scheme adopts the same overall architecture as the proposed GenSSBF framework, except that the bidirectional cross-attention module is removed. The remaining components, including the decoupled sensing procedure and the conditional flow generator, are kept unchanged. By comparing with this baseline, we evaluate the effectiveness of cross-fusion in recovering the latent azimuth-elevation dependency from separate sensing observations.

\item \textbf{MLP-Based Beam Prediction:} This scheme replaces the conditional normalizing-flow model in GenSSBF with a deterministic multi-layer perceptron, while keeping the same decoupled sensing input and training protocol. By comparing with this baseline, we evaluate the benefit of stochastic candidate generation over direct deterministic beam prediction.

\item \textbf{Exhaustive Search ($Q$-beam):} This scheme adopts the conventional DFT codebook-based beam alignment method, where the BS sweeps the candidate beams and the beam with the largest measured received power is selected. In particular, exhaustive search with $1024$ beams corresponds to full sweeping over the entire DFT codebook. By comparing with this baseline, we evaluate the beamforming gain improvement achieved by the proposed GenSSBF framework over traditional codebook search.

\item \textbf{MRT Upper Bound:} This scheme computes the beamformer using maximum ratio transmission (MRT) with perfect channel knowledge. The resulting performance serves as an ideal upper bound for beamforming gain. By comparing with this baseline, we assess how closely the proposed GenSSBF framework approaches the ideal beamforming performance.
\end{itemize}

\subsection{Performance Evaluation} \label{subsec:performance}




\subsubsection{\textbf{Impact of bidirectional cross-attention on decoupled sensing information loss}}

Fig.~\ref{fig:rsrp_corr_before_o1b28} and Fig.~\ref{fig:rsrp_corr_after_o1b28} illustrate the cross-domain dependency between the azimuth- and elevation-domain sensing observations under the proposed GenSSBF scheme. We consider $32$ sensing beams in each angular dimension. To quantify the dependency between the two sensing branches, we adopt the Pearson correlation coefficient to measure their linear dependence \cite{pearson1895notes}, and the mutual information to characterize their general statistical dependence \cite{shannon1948mathematical, kraskov2004estimating}. 
Fig.~\ref{fig:rsrp_corr_before_o1b28} shows the dependency pattern directly obtained from the decoupled RSRP observations $(\mathbf r_\phi,\mathbf r_\theta)$ before the cross-fused encoder. Since decoupled angular sensing only preserves marginal power responses along the azimuth and elevation dimensions, the resulting cross-domain dependency is relatively weak and fragmented. In contrast, Fig.~\ref{fig:rsrp_corr_after_o1b28} shows that both the Pearson correlation and the mutual information increase after the bidirectional cross-attention module. This indicates that the proposed cross-fused encoder can effectively enhance the latent dependency between the two sensing branches, thereby alleviating the information loss caused by decoupled angular sensing and providing a more informative condition for the subsequent normalizing flow generator.

\begin{table}[t]
\caption{Online Beam Alignment Overhead Comparison}
\label{tab:overhead_comparison}
\centering
\footnotesize
\renewcommand{\arraystretch}{1.15}
\begin{tabular}{|c|c|c|}
\hline
\textbf{Scheme} & \textbf{Overhead} & \textbf{Remark} \\
\hline
Exhaustive Search & $M_{\phi}M_{\theta}$ & Full 2D DFT sweeping \\
\hline
MLP-Based Prediction & $M_{\phi}+M_{\theta}$ & One-shot determination \\
\hline
GenSSBF & $M_{\phi}+M_{\theta}+Q$ & $Q$-candidate generation \\
\hline
\end{tabular}
\end{table}

\subsubsection{\textbf{Performance of beamforming gain}}

Figs.~\ref{fig:I2-28}--\ref{fig:Boston5G_28} show the normalized beamforming gain versus the number of probing beams under different schemes in the I2\_28, O1B\_28, and Boston5G\_28 scenarios. Compared with the benchmark schemes, the proposed GenSSBF ($M=8$) achieves the best beamforming gain performance in all three scenarios.
Specifically, with $64$ probing beams, the proposed GenSSBF ($M=8$) achieves an $83.6\%$, $74.6\%$, and $38.1\%$ gain improvement over Exhaustive Search with the full $1024$-beam DFT codebook in the I2\_28, O1B\_28, and Boston5G\_28 scenarios, respectively, while reducing the sweeping overhead by $93.8\%$. Compared with Exhaustive Search under a very limited probing budget of $Q=4$, the gain improvement of GenSSBF is $89.8\%$ in I2\_28, $46.0\%$ in O1B\_28, and $76.9\%$ in Boston5G\_28.
This is because the proposed framework exploits site-specific channel characteristics to generate a small set of high-fidelity beam candidates from limited separate sensing observations, rather than relying on fixed DFT codebook sweeping. Moreover, GenSSBF ($M=8$) consistently outperforms GenSSBF ($M=1$) in all three scenarios. This is consistent with Proposition~\ref{prop:high-gain} that a larger generation number $M$ increases the probability that the candidate set contains a high-gain beam. 
In addition, the GenSSBF without Cross-Attention scheme achieves worse performance than the full GenSSBF in all scenarios, which confirms the effectiveness of the proposed cross-attention module in extracting the latent correlation between separate sensing observations. 
Furthermore, even GenSSBF ($M=1$) outperforms the MLP-based Beam Prediction scheme, which further demonstrates the advantage of stochastic generative beam synthesis over deterministic beam prediction.

\begin{figure*}[htbp]
\centering
\setlength{\abovecaptionskip}{0cm}

\subfloat[\footnotesize{I2\_28: Exhaustive Search (32-beam)}]{
\includegraphics[width=0.22\linewidth, trim = 0cm 0cm 0cm 0cm, clip = true]{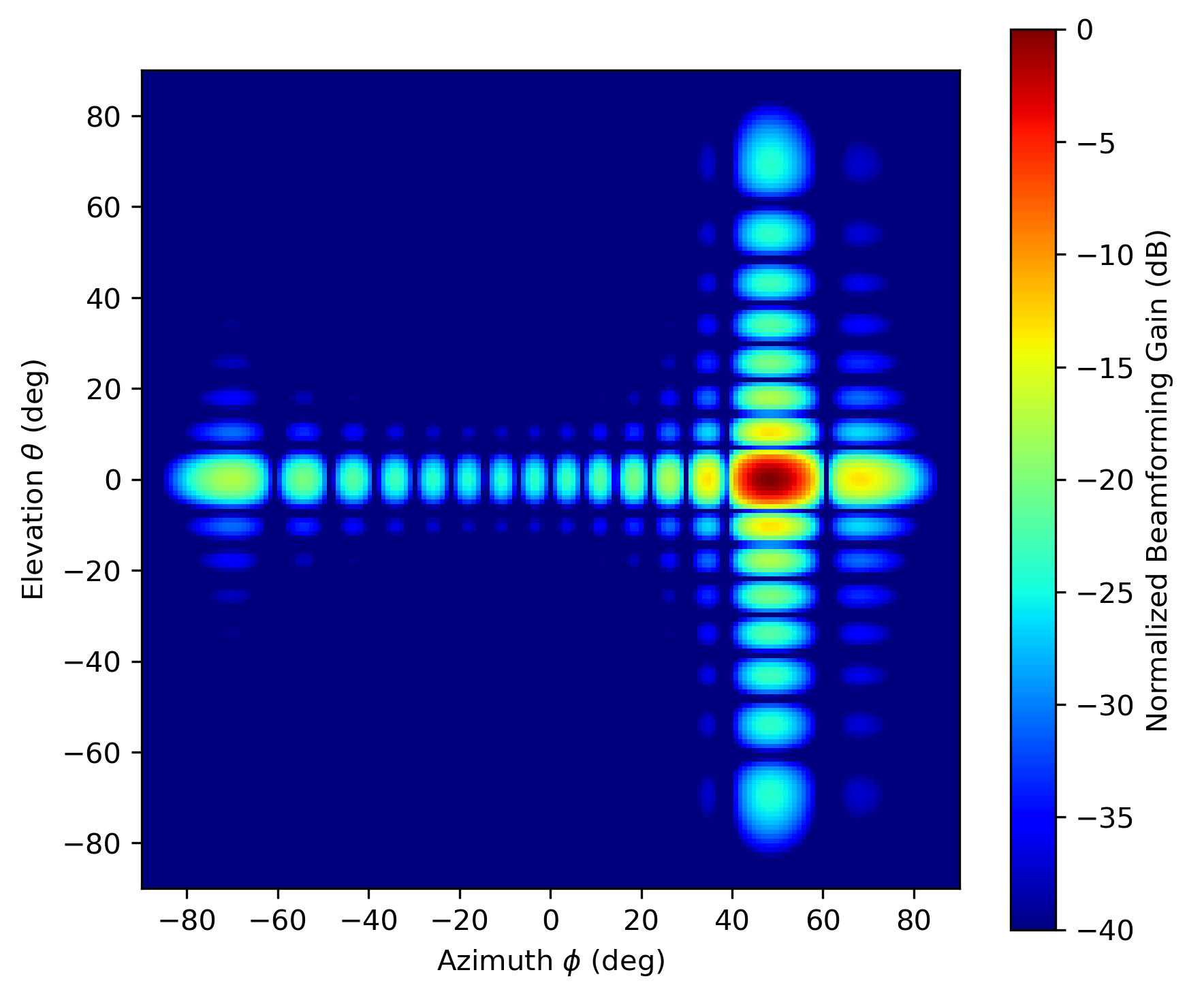}
\label{fig:bp_i2_dft}
}
\hfill
\subfloat[\footnotesize{I2\_28: MLP-based Beam Prediction}]{
\includegraphics[width=0.22\linewidth, trim = 0cm 0cm 0cm 0cm, clip = true]{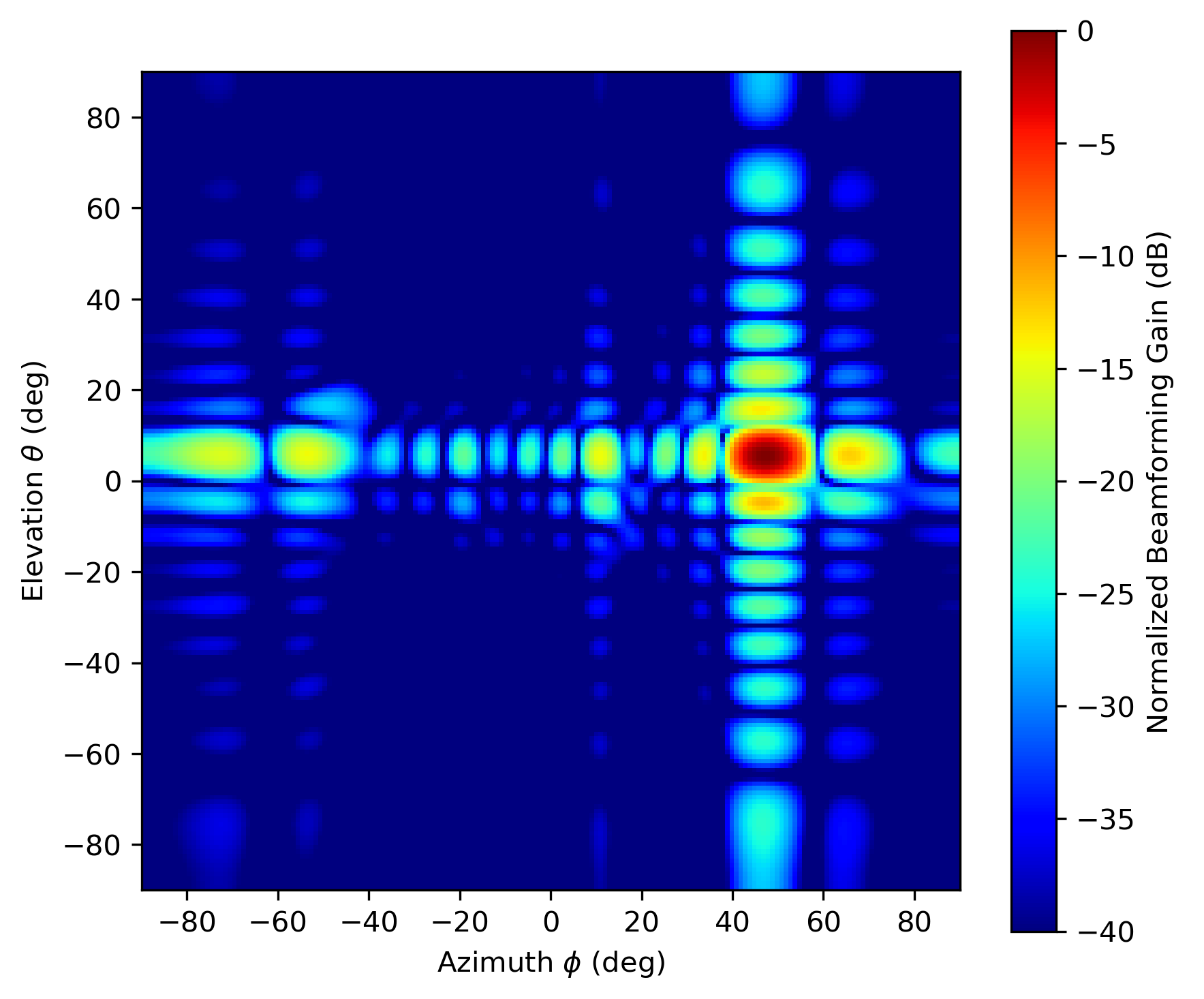}
\label{fig:bp_i2_mlp}
}
\hfill
\subfloat[\footnotesize{I2\_28: GenSSBF ($M=16$)}]{
\includegraphics[width=0.22\linewidth, trim = 0cm 0cm 0cm 0cm, clip = true]{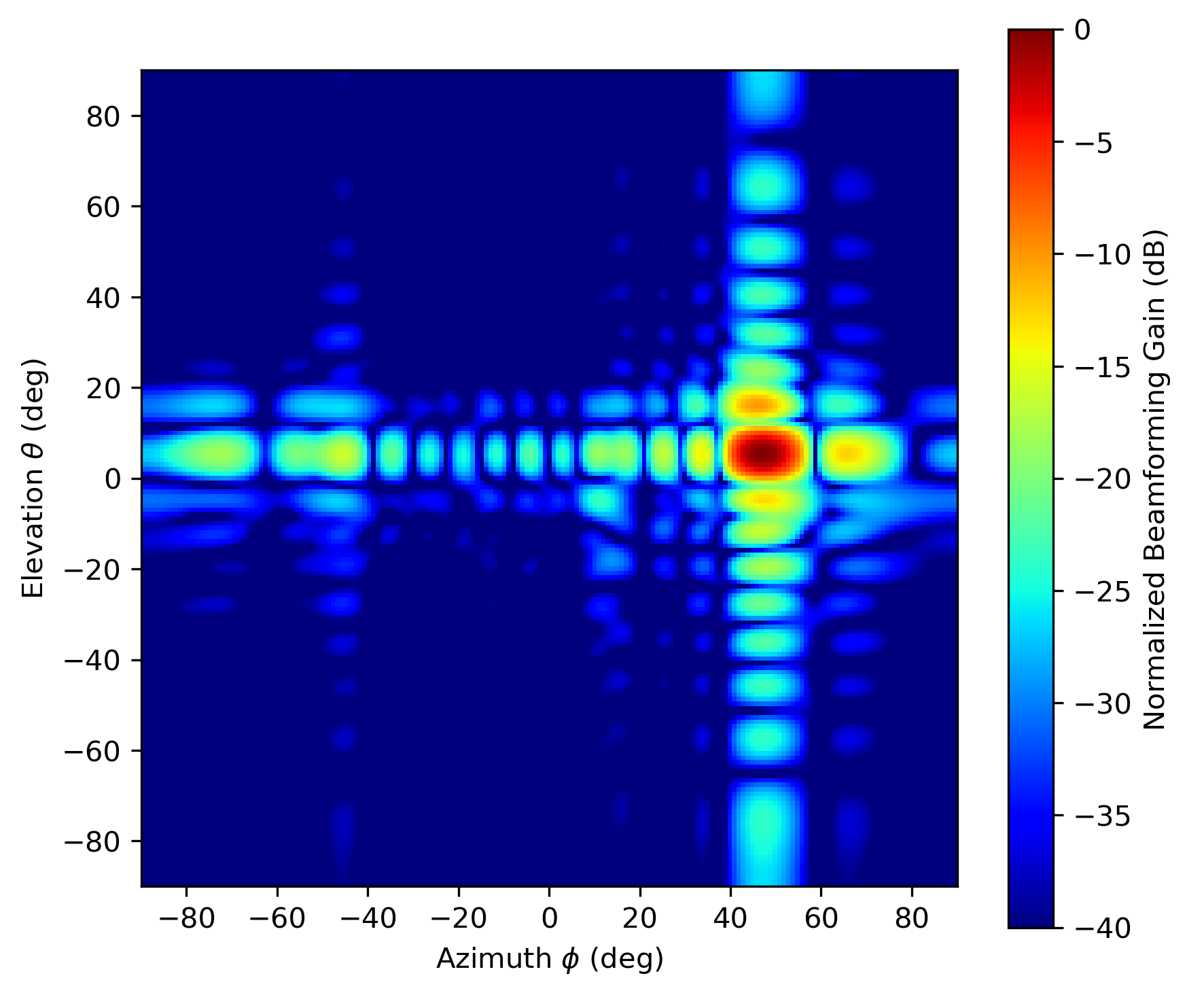}
\label{fig:bp_i2_genssbf}
}
\hfill
\subfloat[\footnotesize{I2\_28: MRT Upper Bound}]{
\includegraphics[width=0.22\linewidth, trim = 0cm 0cm 0cm 0cm, clip = true]{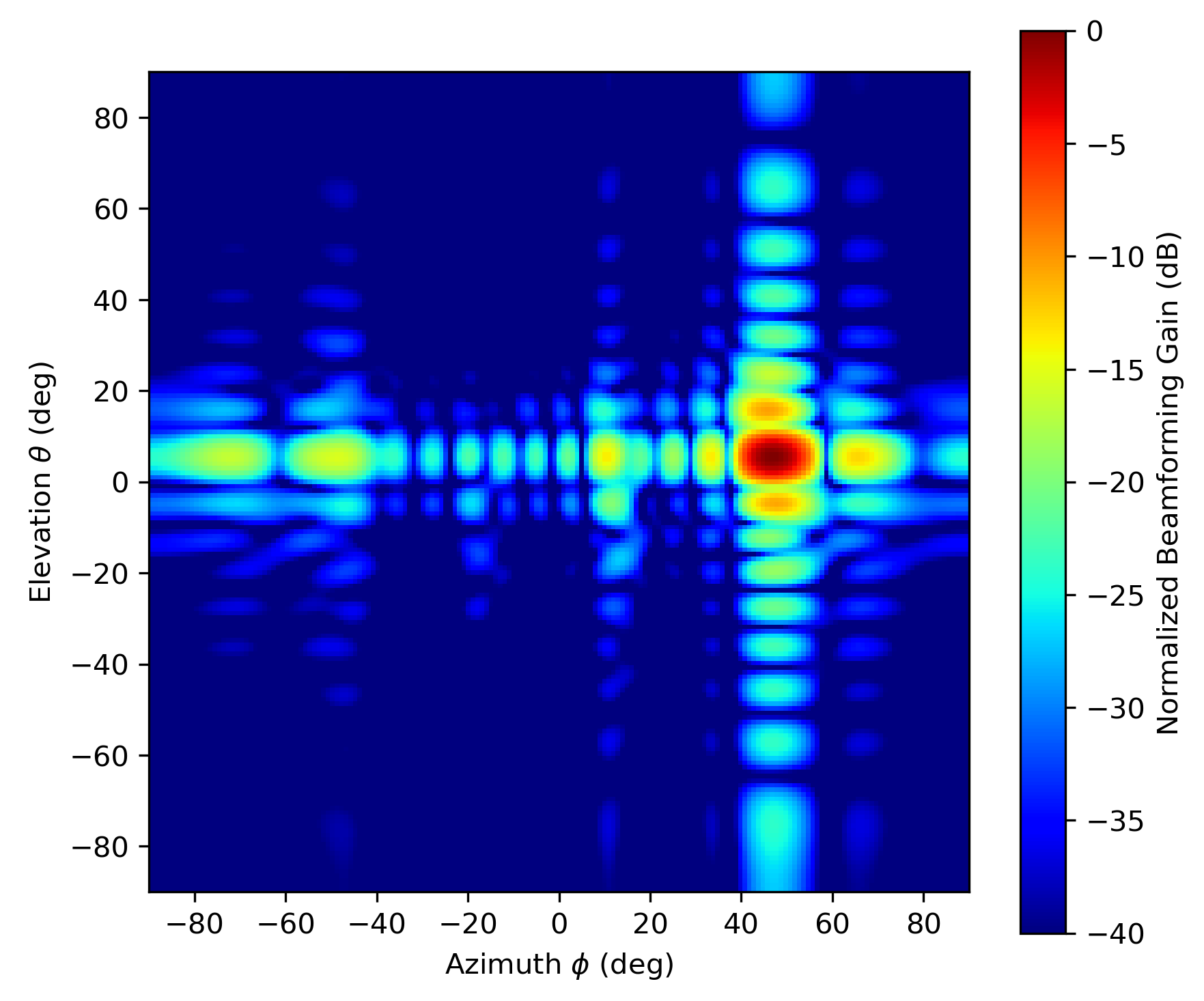}
\label{fig:bp_i2_mrt}
}

\vspace{0.1cm}

\subfloat[\footnotesize{O1B\_28: Exhaustive Search (32-beam)}]{
\includegraphics[width=0.22\linewidth, trim = 0cm 0cm 0cm 0cm, clip = true]{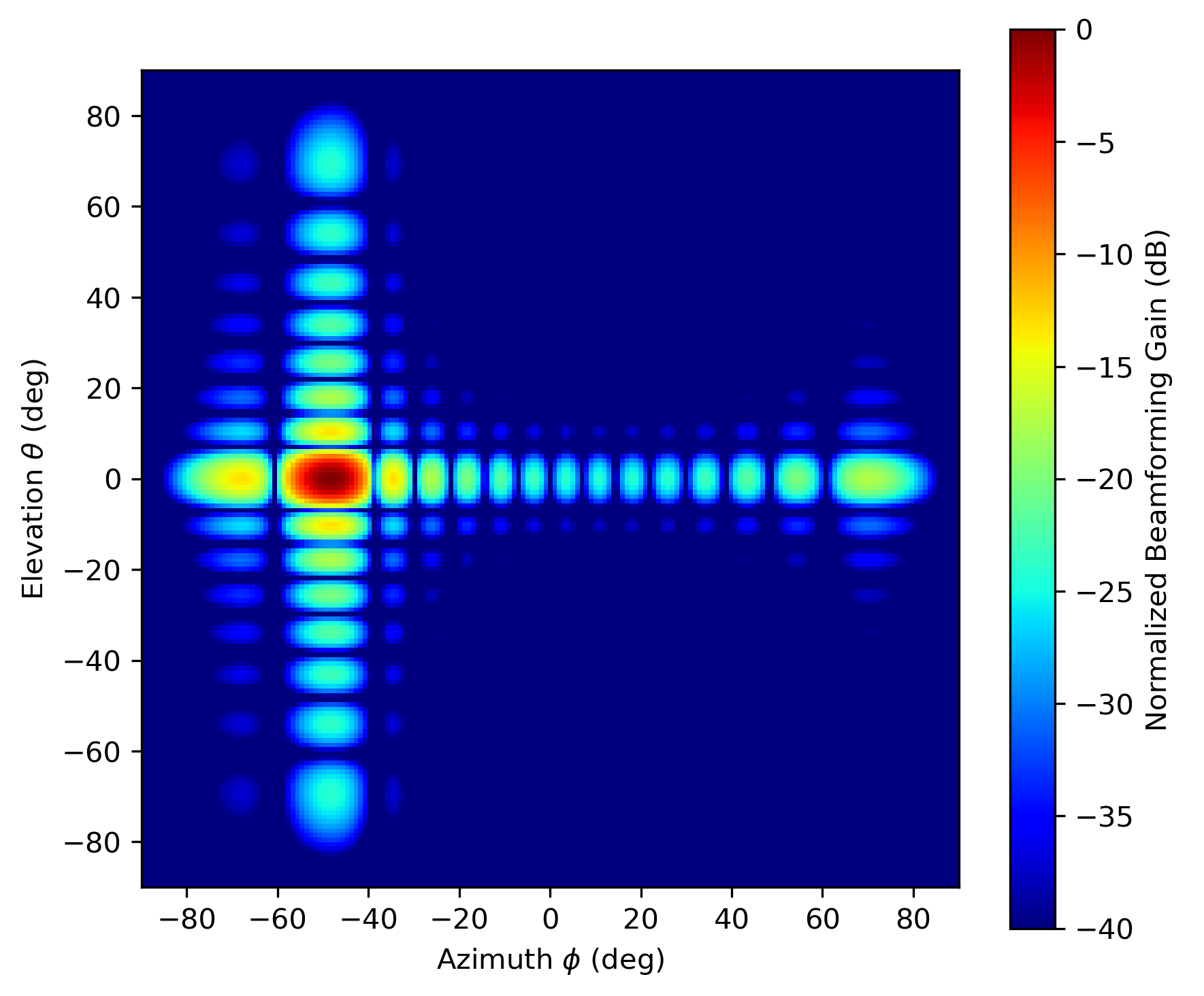}
\label{fig:bp_o1b_dft}
}
\hfill
\subfloat[\footnotesize{O1B\_28: MLP-based Beam Prediction}]{
\includegraphics[width=0.22\linewidth, trim = 0cm 0cm 0cm 0cm, clip = true]{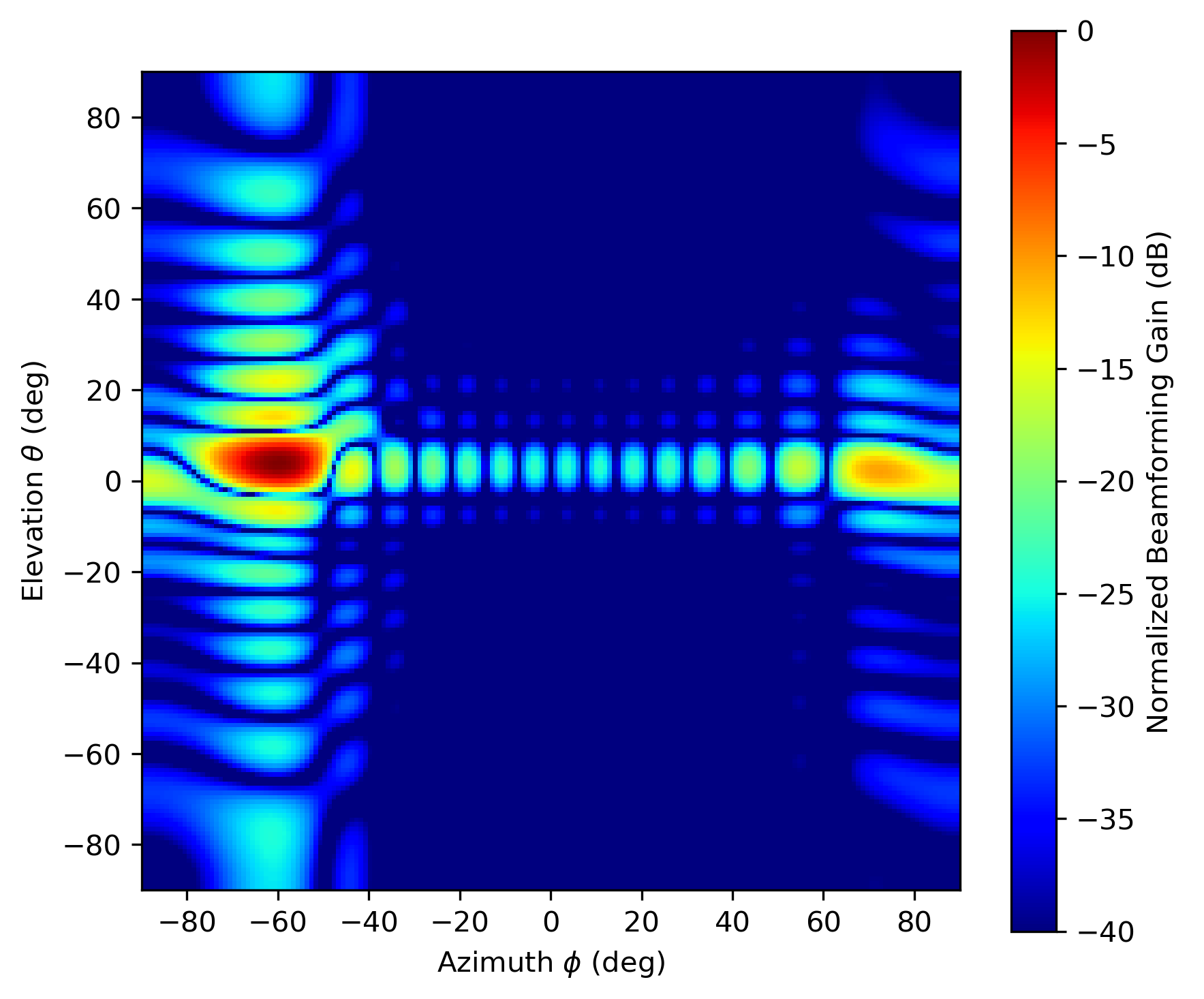}
\label{fig:bp_o1b_mlp}
}
\hfill
\subfloat[\footnotesize{O1B\_28: GenSSBF ($M=16$)}]{
\includegraphics[width=0.22\linewidth, trim = 0cm 0cm 0cm 0cm, clip = true]{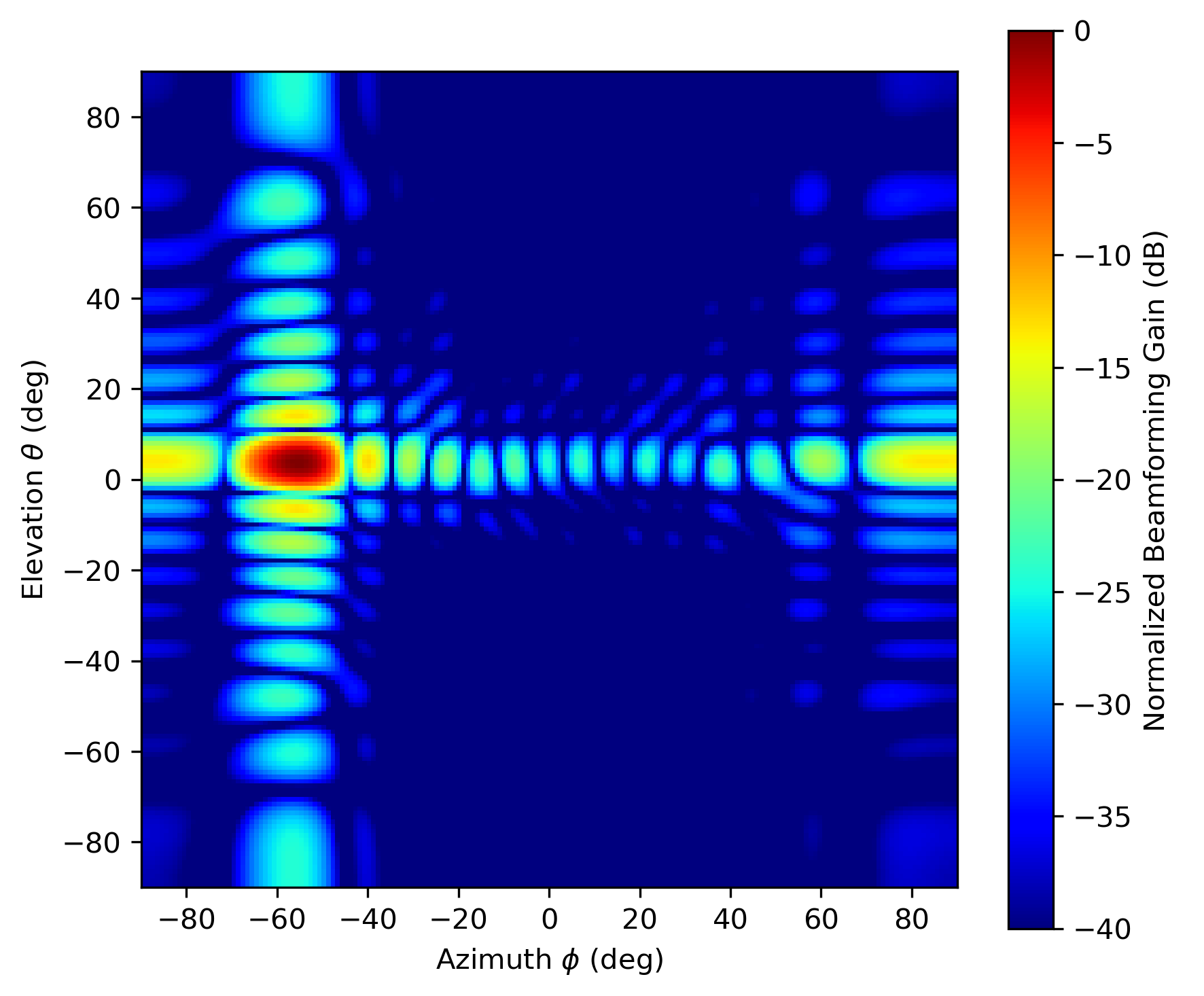}
\label{fig:bp_o1b_genssbf}
}
\hfill
\subfloat[\footnotesize{O1B\_28: MRT Upper Bound}]{
\includegraphics[width=0.22\linewidth, trim = 0cm 0cm 0cm 0cm, clip = true]{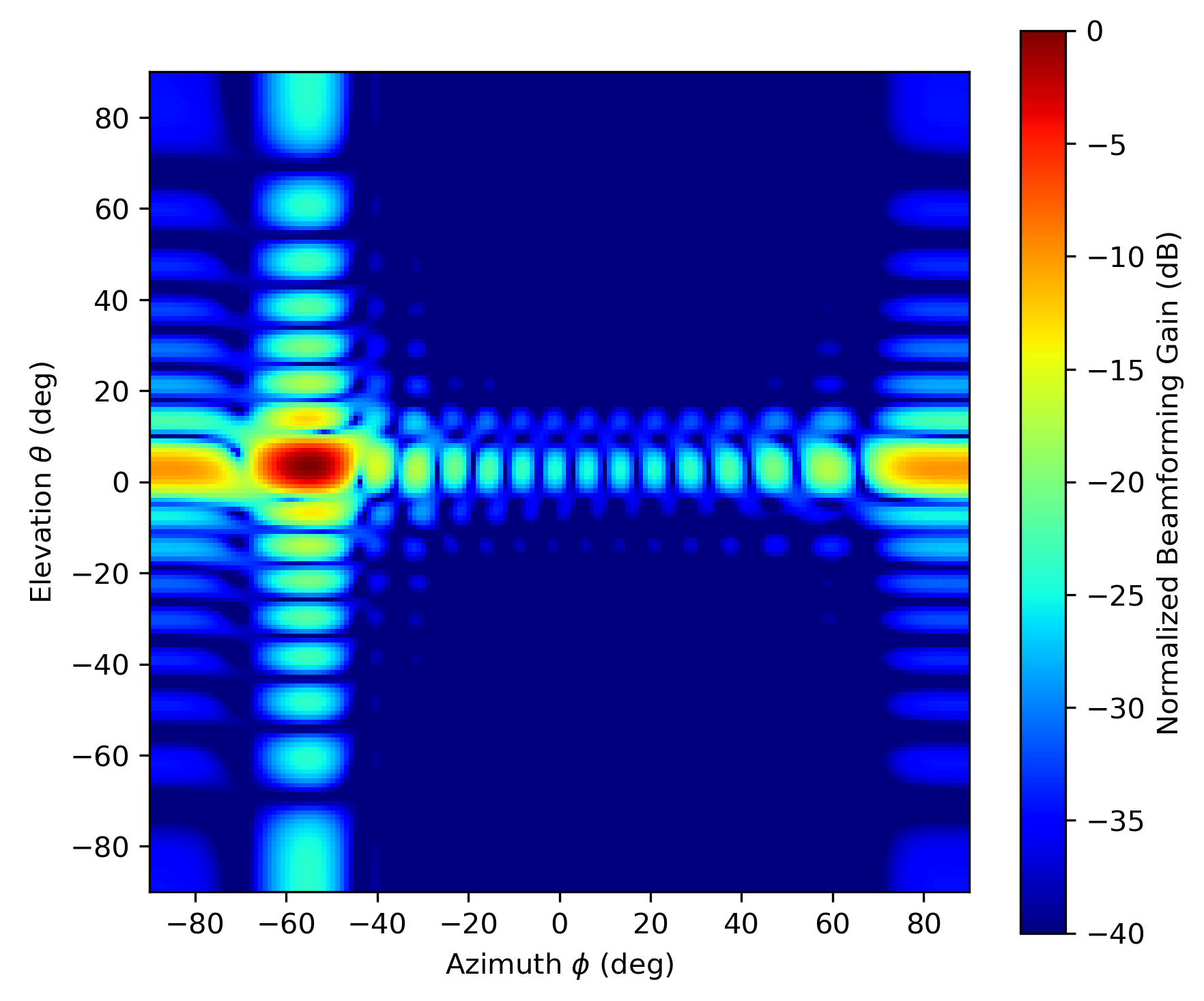}
\label{fig:bp_o1b_mrt}
}

\vspace{0.1cm}

\subfloat[\footnotesize{Boston5G\_28: Exhaustive Search (32-beam)}]{
\includegraphics[width=0.22\linewidth, trim = 0cm 0cm 0cm 0cm, clip = true]{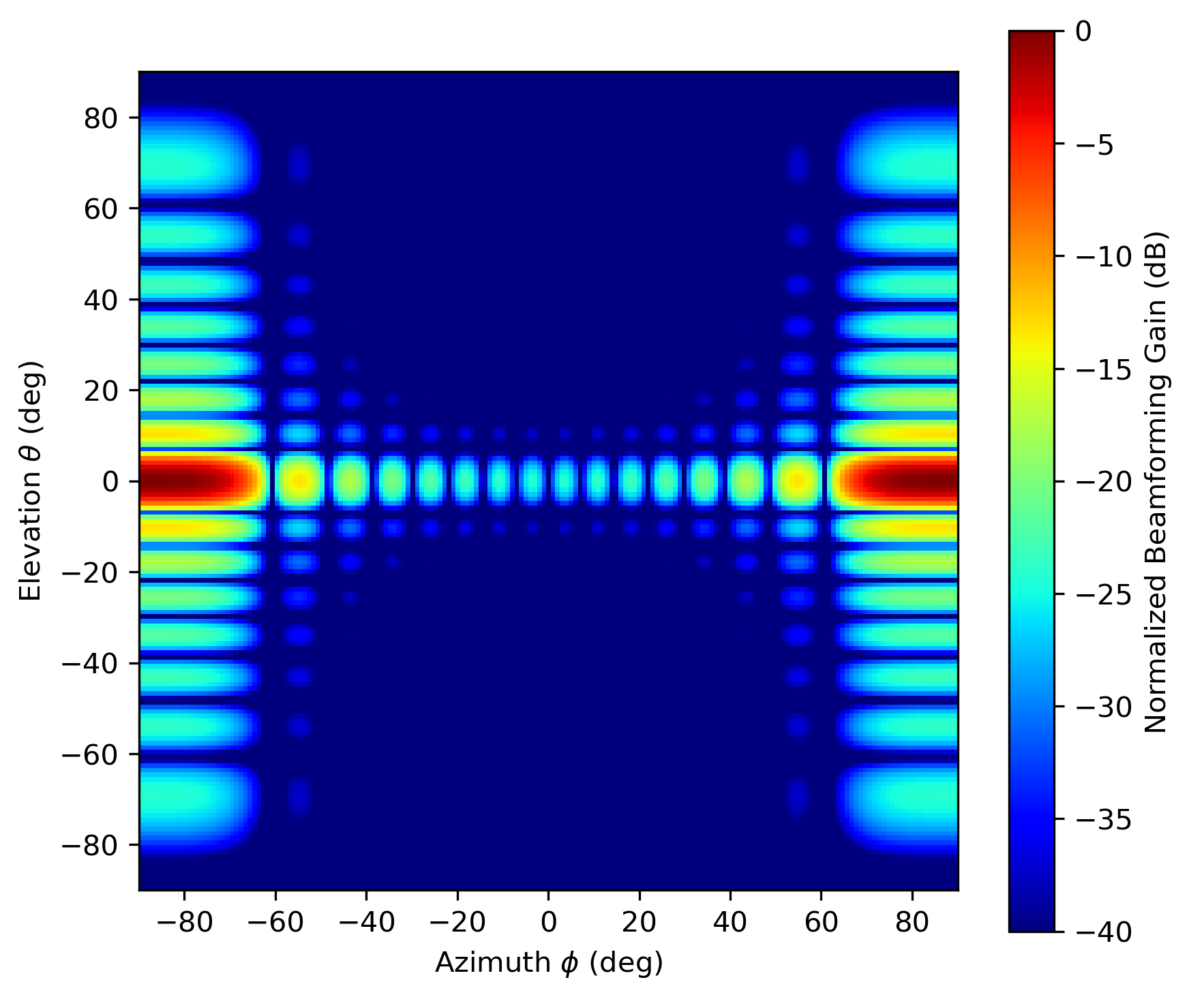}
\label{fig:bp_boston_dft}
}
\hfill
\subfloat[\footnotesize{Boston5G\_28: MLP-based Beam Prediction}]{
\includegraphics[width=0.22\linewidth, trim = 0cm 0cm 0cm 0cm, clip = true]{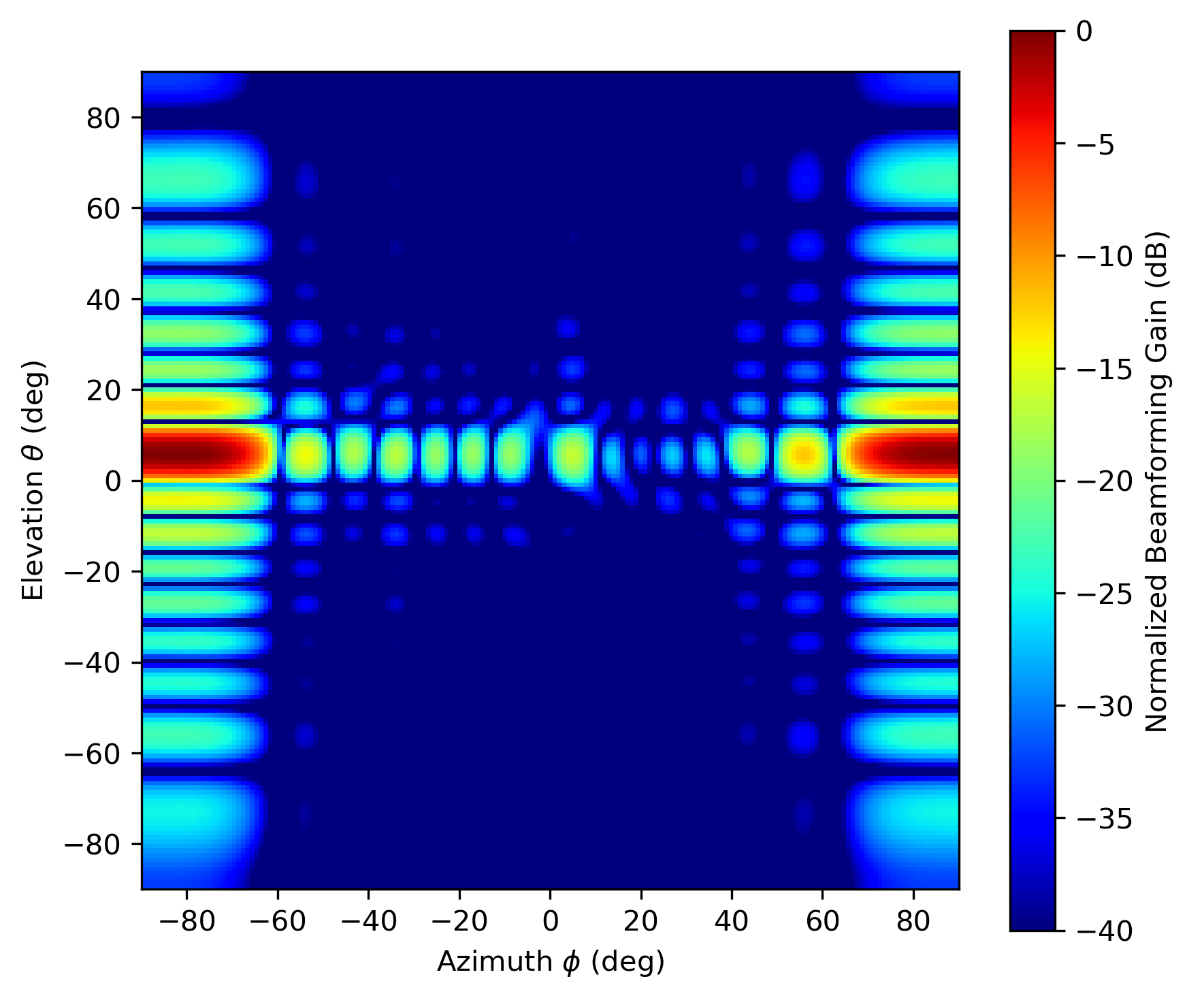}
\label{fig:bp_boston_mlp}
}
\hfill
\subfloat[\footnotesize{Boston5G\_28: GenSSBF ($M=16$)}]{
\includegraphics[width=0.22\linewidth, trim = 0cm 0cm 0cm 0cm, clip = true]{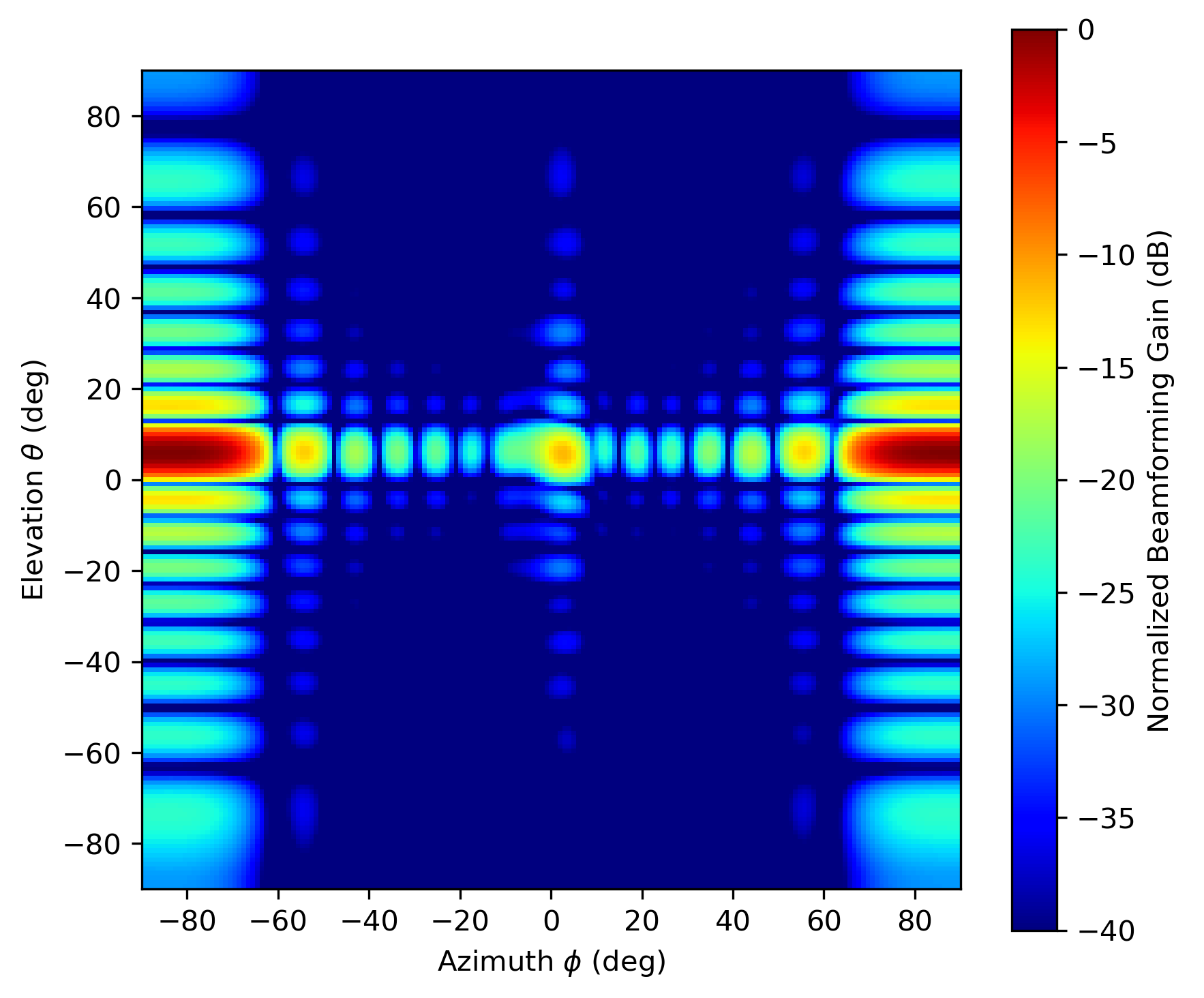}
\label{fig:bp_boston_genssbf}
}
\hfill
\subfloat[\footnotesize{Boston5G\_28: MRT Upper Bound}]{
\includegraphics[width=0.22\linewidth, trim = 0cm 0cm 0cm 0cm, clip = true]{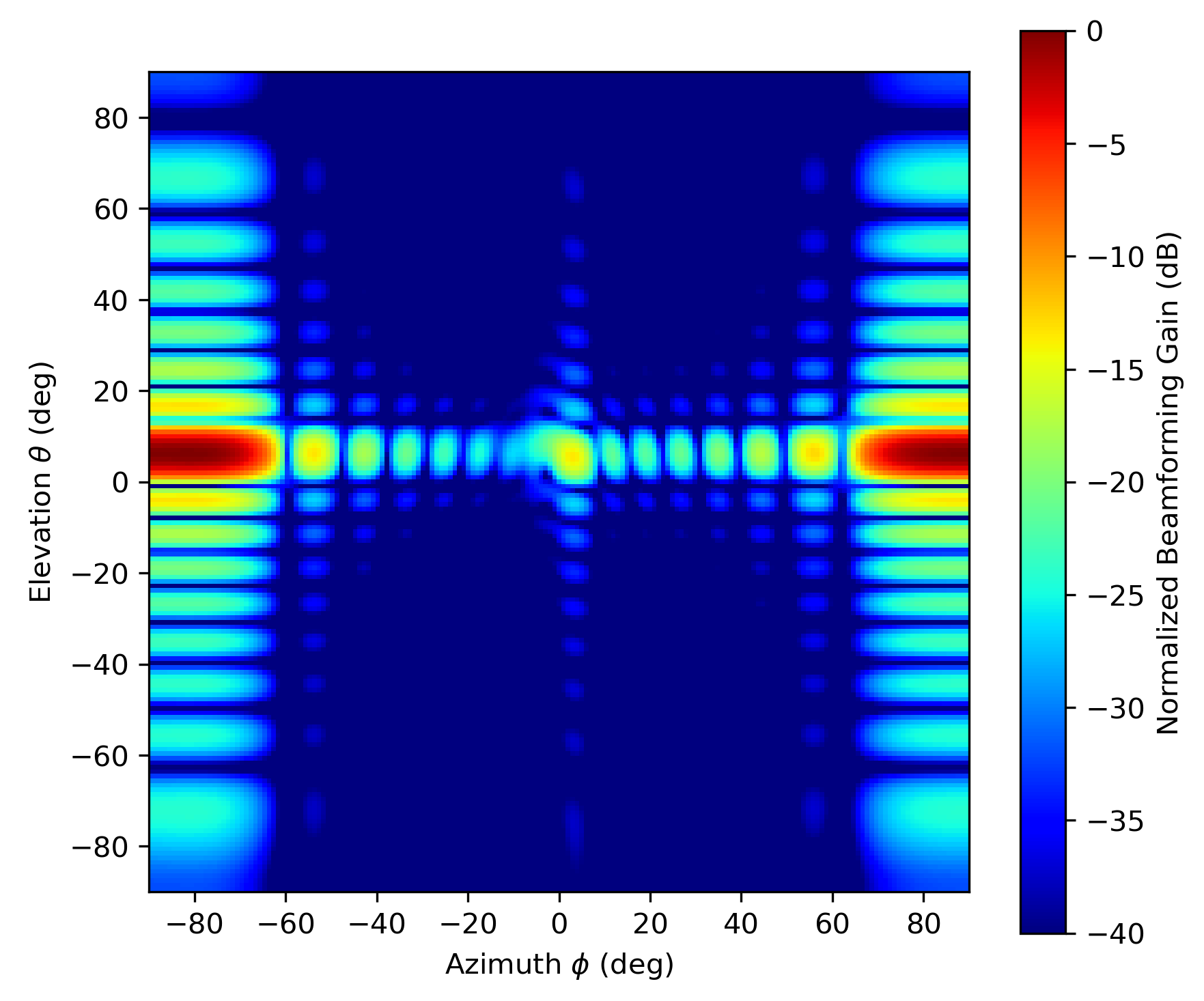}
\label{fig:bp_boston_mrt}
}
\vspace{0.3cm}
\caption{\footnotesize{Beam pattern comparison under different scenarios and beamforming schemes. The first, second, and third rows correspond to the I2\_28, O1B\_28, and Boston5G\_28 scenarios, respectively. In each row, (a)/(e)/(i) show the Exhaustive Search (32-beam), (b)/(f)/(j) show the MLP-based Beam Prediction, (c)/(g)/(k) show the proposed GenSSBF ($M=16$), and (d)/(h)/(l) show the MRT upper bound.}}
\label{fig:beam_pattern_all}
\end{figure*}

\subsubsection{\textbf{Overhead analysis}}

We compare the online beam alignment overhead under different schemes in Table~\ref{tab:overhead_comparison}. Compared with Exhaustive Search, the proposed GenSSBF and the MLP-based Beam Prediction scheme both require only separate azimuth and elevation sensing, and therefore reduce the probing overhead from $M_{\phi}M_{\theta}$ to $M_{\phi}+M_{\theta}$. This substantially lowers the online sensing cost, especially when the beam resolution in both angular dimensions is high.
In the MLP-based scheme, the beamformer is predicted in a one-shot deterministic manner, and its total online overhead is $M_{\phi}+M_{\theta}$. 
In contrast, GenSSBF generates a small set of candidate beamformers, and selects the best one through short pilot-based verification, which results in a total overhead of $M_{\phi}+M_{\theta}+Q$. Although this introduces an extra overhead of $Q$, the value of $Q$ is typically much smaller $Q \ll M_{\phi}M_{\theta}$. Therefore, the overhead remains significantly lower than that of full DFT sweeping.
The overhead difference reflects the tradeoff between robustness and online cost. 
Exhaustive Search scheme incurs the largest overhead because it requires full two-dimensional beam sweeping. 
The MLP-based Beam Prediction scheme achieves the lowest overhead. However, due to the ambiguity inherent in decoupled sensing, its one-shot deterministic prediction is generally insufficient to reliably obtain a high-quality beamformer. 
The proposed GenSSBF introduces only a small additional verification cost over the MLP-based scheme, while significantly improving beamforming performance through stochastic generation. Therefore, GenSSBF achieves a favorable balance between beam alignment accuracy and online sensing overhead.

\begin{figure*}[htbp]
\centering
\setlength{\abovecaptionskip}{0cm}
\begin{minipage}[t]{0.48\linewidth}
\centering
\includegraphics[width=7cm, trim = 0cm 0cm 0cm 0cm, clip = true]{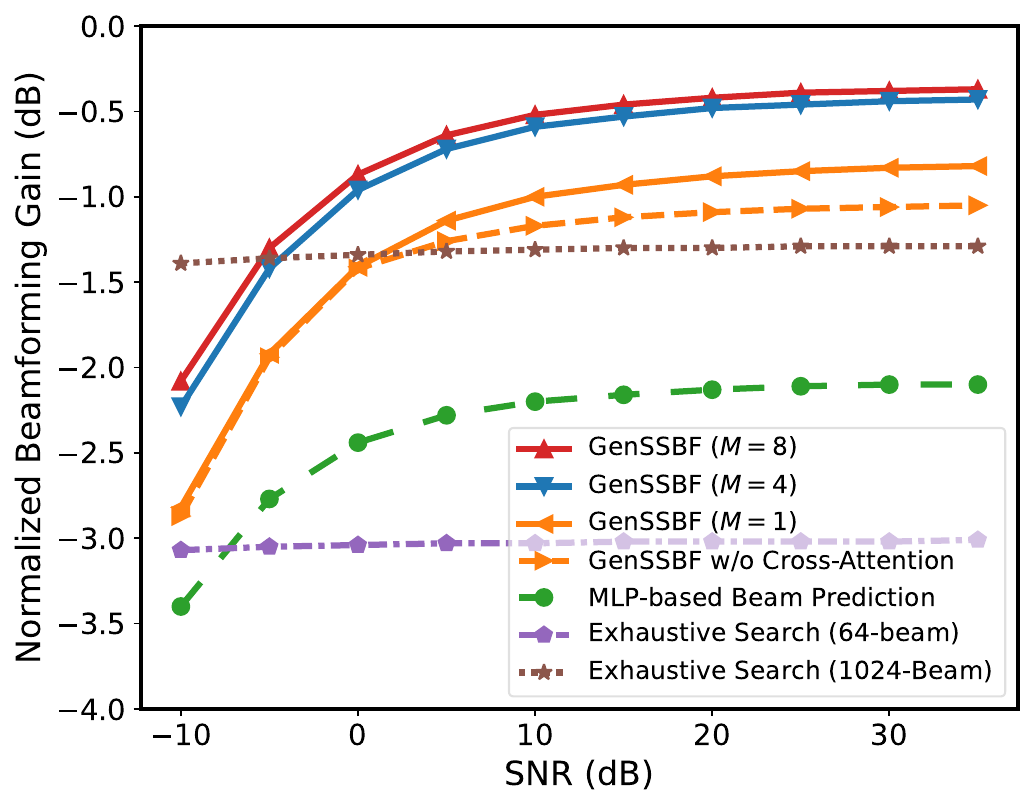}
\caption{\footnotesize{Normalized beamforming gain v.s. SNR in the I2-28B scenario.}}
\label{fig:SNR_I2_28B}
\end{minipage}
\quad
\begin{minipage}[t]{0.48\linewidth}
\centering
\includegraphics[width=7cm, trim = 0cm 0cm 0cm 0cm, clip = true]{./snr_beamforming_gain_o1b_28.pdf}
\caption{\footnotesize{Normalized beamforming gain v.s. SNR in the O1B-28 scenario.}}
\label{fig:SNR_O1B_28}
\end{minipage}
\end{figure*}

\begin{figure}[t]
\begin{center}
\setlength{\abovecaptionskip}{0cm}
\includegraphics[width=7cm, trim = 0cm 0cm 0cm 0cm, clip = true]{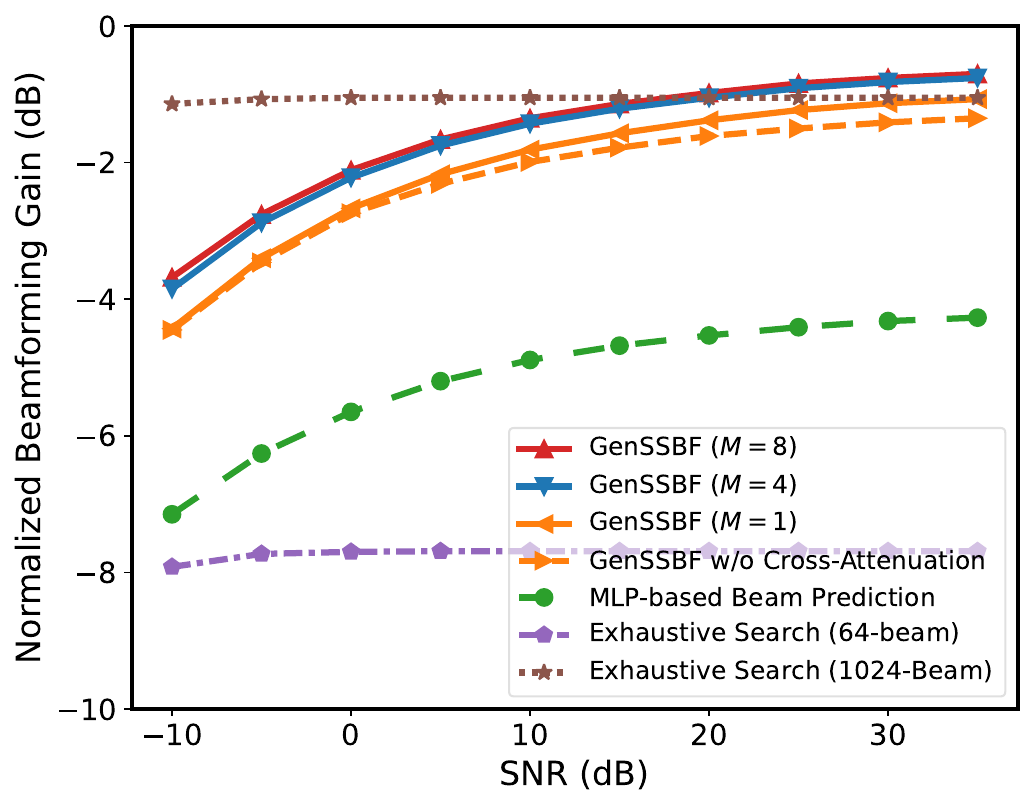}
\caption{\footnotesize{Normalized beamforming gain v.s. SNR in the Boston5G-28 scenario.}}
\label{fig:SNR_5G_28}
\end{center}
\end{figure}

\subsubsection{\textbf{Visualization of beam patterns}}

Fig.~\ref{fig:beam_pattern_all} shows the beam patterns under different schemes in the I2\_28, O1B\_28, and Boston5G\_28 scenarios. Under the same number of probing beams $32$, the proposed GenSSBF achieves beam patterns that are consistently closer to the MRT Upper Bound than the Exhaustive Search and the MLP-based Beam Prediction schemes. This observation is consistent with the beamforming gain results in Figs.~\ref{fig:I2-28}--\ref{fig:Boston5G_28}.
The reason is that the DFT-based Exhaustive Search scheme is restricted to a fixed codebook and thus cannot adapt to the site-specific propagation structure, which leads to evident beam mismatch in all three scenarios. By contrast, the proposed GenSSBF generates beamformers directly from the separate sensing observations and can better align the main lobe with the dominant propagation direction. Therefore, its beam pattern is much closer to that of MRT, which is the ideal benchmark with perfect channel knowledge.
Moreover, compared with the MLP-based beam prediction scheme, the proposed GenSSBF also achieves a beam pattern closer to MRT. This is because the deterministic MLP tends to produce a compromised solution under ambiguous decoupled sensing, whereas the stochastic generative framework can better capture the plausible high-gain beam directions. Therefore, the visualization results further verify the advantage of the proposed GenSSBF in synthesizing high-fidelity site-specific beamformers under limited probing overhead.

\subsubsection{\textbf{Robustness to noisy sensing}}

Figs.~\ref{fig:SNR_I2_28B}--\ref{fig:SNR_5G_28} show the normalized beamforming gain versus the SNR of the received RSRP measurements in the I2\_28, O1B\_28, and Boston5G\_28 scenarios, where the number of probing beams is fixed to $Q=64$. The noise power is determined according to the median received power of the probing DFT beams and the target SNR, which varies from $-10$ dB to $35$ dB.
We can observe that the beamforming gain of all schemes improves as the SNR increases. This is because a lower SNR introduces stronger uncertainty into the separate sensing observations, making the RSRP vectors less informative for beam prediction. In contrast, when the SNR is high, the sensed RSRP better reflects the underlying angular structure of the channel, and thus provides more reliable guidance for beam generation.
Compared with the benchmark schemes, the proposed GenSSBF achieves superior beamforming gain over a wide SNR range in all three scenarios. For example, in the I2\_28 scenario, GenSSBF outperforms Exhaustive Search with the full $1024$-beam DFT codebook by $83.6\%$ at $35$ dB SNR. Similar performance advantages are observed in the O1B\_28 and Boston5G\_28 scenarios, where the gain improvement over Exhaustive Search with $1024$ beams reaches $71.3\%$ and $33.3\%$ at $35$ dB SNR, respectively. Moreover, under the same probing beams, GenSSBF consistently outperforms Exhaustive Search with $64$ beams, which demonstrates the effectiveness of the proposed generative beam prediction framework under practical low-overhead sensing.
It is also observed that the performance of GenSSBF degrades at very low SNR. When the SNR falls below approximately $-5$ dB, the separate RSRP observations become heavily contaminated by noise, and the resulting lack of reliable sensing information makes accurate beam generation more difficult. 
In this case, exhaustive search with the full DFT codebook may achieve higher beamforming gain, at the expense of substantially larger sweeping overhead. Therefore, the results indicate that the proposed GenSSBF is robust to sensing noise in the practical SNR region, while retaining a favorable gain-overhead tradeoff compared with conventional codebook-based beam sweeping.


\section{Conclusions} \label{sec:conclusion}

In this paper, we investigated low-overhead generative site-specific beamforming for UPA-enabled mmWave downlink systems. Specifically, we proposed a decoupled sensing strategy, where the BS probes the azimuth and elevation domains independently to reduce the online beam sweeping overhead from multiplicative complexity to linear complexity. We showed that such decoupled channel sensing introduces intrinsic ambiguity by discarding the joint azimuth-elevation coupling of the UPA channel. To address this issue, we developed a cross-fused GenSSBF framework, where a bidirectional cross-attention encoder extracts the latent dependency between the separate RSRP observations, and a conditional normalizing flow generates multiple candidate beamformers. Then, we designed a task-oriented training objective to encourage the generated candidate set to contain at least one high-gain beam. 
Simulation results based on the realistic propagation scenarios showed that the proposed framework outperforms deterministic beam prediction, GenSSBF without cross-attention, and conventional DFT codebook search in terms of normalized beamforming gain.
Moreover, the proposed scheme maintained robust performance under practical noisy sensing conditions and generated beam patterns close to the MRT upper bound with much lower overhead than full DFT codebook sweeping.




\balance
\bibliographystyle{IEEEtran}
\bibliography{reference/mybib}

\end{document}